\newtheorem{theorem}{Theorem}
\newtheorem{corollary}[theorem]{Corollary}
\newtheorem{lemma}[theorem]{Lemma}
\newtheorem{claim}[theorem]{Claim}
\title{Token Swapping on Trees\thanks{This work was begun at the University of Waterloo and was partially supported by the Natural Sciences and Engineering Council of Canada (NSERC).}
}
\author[Biniaz, Jain, Lubiw, Mas\'arov\'a, Miltzow, Mondal, Naredla, Tkadlec, Turcotte]{Ahmad Biniaz\affiliationmark{1}
\and Kshitij Jain\affiliationmark{2}
\and Anna Lubiw\affiliationmark{3}\\
\and Zuzana Mas\'arov\'a\affiliationmark{4}
\and Tillmann Miltzow\affiliationmark{5}
\and Debajyoti Mondal\affiliationmark{6}\\
\and Anurag Murty Naredla\affiliationmark{3}
\and Josef Tkadlec\affiliationmark{7}
\and Alexi Turcotte\affiliationmark{8}
}
\affiliation{
University of Windsor, Windsor, ON, Canada\\
Google Waterloo, Waterloo, ON, Canada\\
University of Waterloo, Waterloo, ON, Canada\\
IST Austria, Klosterneuburg, Austria\\
Utrecht University, Utrecht, Netherlands\\
University of Saskatchewan, Saskatoon, SK, Canada\\
Harvard University, Cambridge, MA, USA\\
Northeastern University, Boston, MA, USA
}
\keywords{token swapping, reconfiguration, sorting with transposition tree, Cayley graph}
\begin{document}

\received{2021-08-23}
\revised{2022-09-19}
\accepted{2022-10-15}
\publicationdetails{24}{2022}{2}{9}{8383}

\maketitle

\begin{abstract}
 
The input to the token swapping problem is a graph with vertices $v_1, v_2, \ldots, v_n$, and $n$ tokens with labels $1, 2, \ldots, n$, one on each vertex. The goal is to get token $i$ to vertex $v_i$ for all $i= 1, \ldots, n$ using a minimum number of \emph{swaps}, where a swap exchanges the tokens on the endpoints of an edge. We present some results about token swapping on a tree, also known as ``sorting with a transposition tree'': 
\begin{enumerate}
\item An optimum swap sequence may need to perform a swap on a leaf vertex that has the correct token (a ``happy leaf''), disproving a conjecture of Vaughan.
\item Any algorithm  that fixes happy leaves---as all known approximation algorithms for the problem do---has approximation factor at least $4/3$.  Furthermore, the two best-known 2-approximation algorithms have approximation factor exactly 2.  
\item A generalized problem---weighted coloured token swapping---is NP-complete on trees, even when they are restricted to be subdivided stars, but solvable in polynomial time on paths and stars.  
In this version, tokens and vertices have colours, and colours have weights.  The goal is to get every token to a vertex of the same colour, and the cost of a swap is the sum of the weights of the two tokens involved.
\end{enumerate}
\end{abstract} 

\section{Introduction}
\label{sec:Introduction}

Suppose we wish to sort a list of numbers and the only allowable operation is to swap two adjacent elements of the list. It is well known that the number of swaps required is equal to the number of {\em inversions} in the list, i.e., the number of pairs that are out of order. Many other problems of sorting with a restricted set of operations have been studied, for example, pancake sorting, where the elementary operation is to flip a prefix of the list; finding the minimum number of pancake flips for a given list was recently proved NP-complete~\cite{bulteau2015pancake}.

A much more general problem arises when we are given a set of generators of a permutation group, and asked to express a given permutation $\pi$ in terms of those generators. Although there is a polynomial time algorithm to test if a permutation can be generated, finding a minimum length generating sequence was proved PSPACE-complete in 1985~\cite{jerrum1985complexity}.

This paper is about a problem, known recently in the computer science community as \emph{token swapping}, that is intermediate between sorting a list by swaps and general permutation generation. The input is a graph with $n$ vertices $v_1, \ldots, v_n$.  There are $n$ tokens, labelled $1, 2, \ldots, n$, and one token is placed on each vertex. The goal is to ``sort'' the tokens, which means getting token $i$ on vertex $v_i$, for all $i=1, \ldots, n$. The only allowable operation is to \emph{swap} the tokens at the endpoints of an edge, i.e., if $e = (v_i,v_j)$ is an edge of the graph and token $k$ is at $v_i$ and token $l$ is at $v_j$, then we can move token $k$ to $v_j$ and token $l$ to $v_i$. See Figure~\ref{fig:tree-example}. The \emph{token swapping problem} is to find the minimum number of swaps to sort the tokens. In terms of permutation groups, the generators are the transpositions determined by the graph edges, and the permutation is 
$\pi(i) = j$ if token $j$ is initially at vertex $v_i$;   
we want a minimum length generating sequence for the permutation.

\begin{figure}[h]
    \centering
    \includegraphics[width=.7\textwidth]{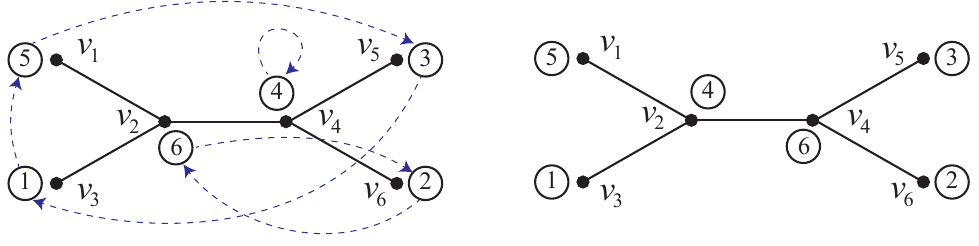}
    \caption{An example of the token swapping problem. Left: a tree of 6 vertices and an initial placement of tokens (in circles) on the vertices.   Blue dashed arrows indicate where each token should go.  Token 4 is home.  The corresponding permutation is $(1\  5\  3) (2\  6) (4)$.  Right: the effect of swapping tokens 4 and 6. Now token 6 is closer to its destination but token 4 is further from its destination. 
    One swap sequence that sorts the tokens to their destinations is (4\ 6), (6\ 2), (2\ 4), (3\ 4), (3\ 2), (3\ 1), (1\ 5), (5\ 2), (5\ 4). This sequence has 9 swaps, but there is a swap sequence of length 7.}
    \label{fig:tree-example}
\end{figure}

Our emphasis is on computing the number of swaps, and the actual swap sequence, needed for a given placement of tokens on a graph. Another interesting problem is to bound the worst-case number of swaps for a given graph, where the worst case is taken over all possible token placements.
This can be formulated in terms of the {\em Cayley graph}, which has a vertex for each possible assignment of tokens to vertices, and an edge when the two token assignments differ by one swap. The worst case number of swaps is equal to the diameter of the Cayley graph. The Cayley graph applies more generally for any permutation group given by a set of generators, where the generators define the edges of the Cayley graph.  This is discussed in more detail in Section~\ref{sec:perm}. 

In the special case when the graph is a path, the token swapping problem 
is precisely the classic problem of sorting a list using adjacent swaps, see Knuth~\cite{knuth1997art3}. Our paper is about token swapping on a tree.  This problem is also known as ``sorting with a transposition tree,'' and is of great interest in the area of sorting networks because the Cayley graph of a star (a tree with one non-leaf) is a good alternative to a hypercube. Akers and Krishnamurthy~\cite{akers1989group} first introduced this idea in 1989, and their paper has been cited more than 1600 times according to Google scholar. 
 
There are three different polynomial-time 2-approximation algorithms for token swapping on trees~\cite{akers1989group,vaughan1995algorithm,yamanaka2015swapping,miltzow2016approximation} (details below). Very recently (after the present work) token swapping on a tree was proved NP-complete~\cite{aichholzer2021hardness}.

Token swapping on general graphs has been studied by  
different research communities in math, computer science, and engineering, often unaware of each others' work.
We survey all the results we know of in Section~\ref{sec:background} below.

The problem of token swapping on graphs was proved NP-complete and in fact APX- hard~\cite{miltzow2016approximation},  and further hardness results have appeared since then~\cite{bonnet2017complexity}. There are polynomial time algorithms for paths, cliques~\cite{cayley}, 
cycles~\cite{jerrum1985complexity}, and stars~\cite{akers1989group,portier-vaughan1990star,pak1999reduced}, and some other special cases, as discussed in more detail below.

The token swapping problem has been generalized in several ways. In \emph{weighted token swapping} each token $i$ has a positive weight $w(i)$  and the cost of swapping token $k$ and token $l$ is $w(k) + w(l)$.  The goal is to sort the tokens while minimizing the sum of the costs of the swaps. In \emph{coloured token swapping}~\cite{goraly2010multi,yamanaka2018colored} the tokens have colours, and we are given an initial and final assignment of coloured tokens to the vertices.
Tokens of the same colour are indistinguishable.  The goal is to move from the initial to the final token arrangement using the fewest swaps.  The original problem is the case where each token has a distinct colour.
Coloured token swapping on graphs is NP-hard for 3 colours~\cite{yamanaka2018colored} but solvable in polynomial time for 2 colours.
In \emph{weighted coloured token swapping} the tokens have colours and each colour has a weight.
Such a weighted colored version has been  studied for string rearrangements under various cost models, which allows swapping non-adjacent elements~\cite{AmirHKLP09}.  

\subsection{Our results} 

A leaf in a tree that already has the correct token is called a \emph{happy leaf}. One feature of all the algorithms for token swapping on trees---both the poly-time algorithms for special cases and the approximation algorithms for the general case---is that they never swap a happy leaf. In 1991 Vaughan~\cite{vaughan1991bounds} conjectured that an optimal swap sequence never needs to swap a token at a happy leaf. We give a 10-vertex counterexample 
to this ``Happy Leaf Conjecture'' 
in Section~\ref{sec:happy-leaf}.

Furthermore, we show in Section~\ref{sec:2-approx} that any algorithm for token swapping on a tree that fixes the happy leaves has approximation factor at least $4/3$, and we show that the two best-known 2-approximation algorithms have approximation factor exactly 2. One insight provided by these results is  that the difficult aspect of token swapping on trees is knowing when and how to swap happy leaves. Our result about the factor 2 lower bound has subsequently been generalized to  any algorithm where no token strays far from its source-destination path~\cite{aichholzer2021hardness}.

Next, we explore \emph{weighted coloured} token swapping. In Section~\ref{sec:weight-colour} we show that weighted coloured token swapping can be solved in polynomial time on paths and stars. In Section~\ref{sec:NP-complete} 
we show that  weighted coloured token swapping is NP-complete for trees. Although this is subsumed by the more recent result that token swapping is NP-complete on trees~\cite{aichholzer2021hardness}, our proof is much shorter, holds even when the trees are restricted to be spiders (i.e., subdivision of stars), and may help to distinguish hard and easy cases of coloured weighted token swapping on trees.

Finally, in an attempt to expand the set of ``easy'' cases, we devised a polynomial time algorithm for token swapping on a \emph{broom}---a star with an attached path---only to discover that this had been done by 
Vaughan~\cite{vaughan1999broom} in 1999, and by Kawahara et al.~\cite{kawahara2016time} in 2016.   
Our simpler proof is in Section~\ref{sec:brooms}.


\subsection{Preliminaries}
\label{sec:preliminaries}

We say that a token is \emph{home} if it is at its destination.  In a tree, \emph{homing} a token means swapping it along the (unique) path from its current position to its destination. 

We defined the token swapping problem as: move token $i$ from its initial vertex to vertex $v_i$, 
with associated permutation $\pi(i) = j$ if token $j$ is initially at $v_i$. An alternative formulation is in terms of an initial and final token assignment.  Suppose $s$ is an initial assignment of tokens to vertices, and $f$ is a final assignment of tokens to vertices.  The goal then is to move each token $i$ from its initial vertex $s(i)$ to its final vertex $f(i)$.  The associated permutation is $\pi(i) = s^{-1}(f(i))$.  (Our first formulation just eases notation by assuming that $f(i) = v_i$.)

A solution to a token swapping problem is a sequence of swaps, $\sigma_1, \sigma_2, \ldots, \sigma_k$.  Our convention is that, starting with the initial token assignment, we perform the swaps starting with $\sigma_1$ and ending with $\sigma_k$ to get the tokens to their final positions. Equivalently, performing the transpositions starting with $\sigma_k$ and ending with $\sigma_1$ generates the associated permutation. 

\section{Background}
\label{sec:background}

This section contains a thorough summary of results on token swapping and related problems.

\subsection{Reconfiguration}
Problems of turning one configuration into another using a limited repertoire of steps have a long history, for example in the study of puzzles and permutation groups~\cite{cayley}.  Recently, the name ``reconfiguration'' has been applied to these problems---see the recent surveys by van den Heuvel~\cite{van-den-heuvel-2013complexity} and Nishimura~\cite{nishimura2018reconfiguration}.
Reconfiguration problems can be formulated in terms of a ``reconfiguration graph'' that has a vertex for each configuration and an edge for each possible reconfiguration step.  As discussed below, when the set of moves forms a permutation group the reconfiguration graph is the Cayley graph.

The general questions that are considered in reconfiguration problems are: can any configuration be reconfigured to any other (connectivity); what is the worst case number of steps required (diameter); and what is the complexity of computing the minimum number of steps required to get from one given configuration to another given configuration (distance). In this paper, we concentrate on distance questions, although we will mention some previous work on deciding reconfigurability and on diameter. We return to more general reconfiguration problems in the final subsection.

\subsection{Permutation groups and Cayley graphs}
\label{sec:perm}
Given a group $(F,*)$, a subset $S$ is a \emph{generator} of $F$, if every element of $F$ can be expressed 
as the product of finitely many elements of $S$ and their inverses. Given a group $F$ and a generator $S$ of $F$, the Cayley graph $\Gamma(F,S)$ has the elements of $F$ as vertices and any two vertices $v,w$ are adjacent, if there exists an element $s\in S$ such that $v * s = w$.
In our context, we are interested in the Cayley graph of the symmetric group $S_n$ that consists of all permutations of the $n$ element set $\{1,\ldots,n\}$.
Given an graph $G = (V,E)$ on $n$ vertices, we define the generating set $S_G$ as  the set of all transpositions corresponding to edges of $E$. The Token Swapping problem corresponds to finding the shortest path in the Cayley graph $\Gamma(S_n,S_G)$ from a given permutation $\pi$ to the identity permutation. Note that this shortest path corresponds to the minimum length generating sequence of $\pi$ by elements of $S_G$. The worst case number of swaps corresponds to the diameter of the Cayley graph. For the special case of token swapping on a path, the Cayley graph is realized geometrically as the graph of a polytope called the \emph{permutohedron} (see~\cite{ziegler2012lectures}).

\subsection{Token swapping on graphs}  
Token swapping on a connected graph of $n$ vertices takes at most $O(n^2)$ swaps---take a rooted spanning tree and, for vertices in leaf-first order, successively \emph{home} the token that goes to that vertex, where \emph{homing} a token means swapping it along the unique path to its final location. This bound is tight for a path with tokens in reverse order. The token swapping problem on graphs (to compute the minimum number of swaps between two given labellings of the graph) is  NP-complete and in fact, APX-complete, as proved by Miltzow et al.~\cite{miltzow2016approximation}, who complemented these hardness results with a polynomial-time 4-approximation algorithm, and an exact exponential time algorithm that is best possible assuming ETH.  These results extend to coloured token swapping.
Bonnet et al.~\cite{bonnet2017complexity} showed that token swapping is W[1]-hard parameterized by number of swaps, but fixed parameter tractable 
for nowhere dense graphs.  This result extends to coloured token swapping and even to a further generalization called ``subset token swapping''. Bonnet et al.~also proved that token swapping is NP-complete on graphs that have tree-width 2 and bounded diameter.

There are many special classes of graphs on which token swapping can be solved via  exact polynomial time algorithms.  These include (in historical order): cliques~\cite{cayley}, paths~\cite{knuth1997art3},   cycles~\cite{jerrum1985complexity}, 
stars~\cite{akers1989group,portier-vaughan1990star,pak1999reduced},
brooms~\cite{vaughan1999broom,kawahara2017time}, squares of paths~\cite{heath2003sorting},
complete bipartite graphs~\cite{yamanaka2015swapping}, and
complete split graphs~\cite{yasui2015swapping}.
See the survey by Kim~\cite{kim2016sorting}.

\subsection{Token swapping on trees}
\label{sec:2-approx-algs}
Various efficient but non-optimal algorithms for token swapping on a tree have been presented in the literature.  
Most of them are 2-approximations---i.e., they use at most twice the optimum number of swaps---although this was not always noted. 
Several of the algorithms are expressed in terms of the paths that tokens should take.
For any token $i$, there is a unique path $p(i)$ from its initial vertex to its final vertex $v_i$.  Let $d(i)$ be the length (the number of edges) of the path $p(i)$, and let $D = \sum_i d(i)$. 

\paragraph*{Happy swap algorithm.} The earliest algorithm we are aware of is due to Akers and Krishnamurthy in 1989~\cite{akers1989group}.  
Their algorithm involves two operations that we will call a ``happy swap'' and a ``shove.'' 
Let $(u,v)$ be an edge with token $i$ on $u$ and token $j$ on $v$.
A \emph{happy swap} exchanges $i$ and $j$ if $p(i)$ includes $v$ and $p(j)$ includes $u$, i.e., the two tokens want to travel in opposite directions across the edge $e$ as the first steps in their paths.  
A \emph{shove} exchanges $i$ and $j$ if $p(i)$ includes $v$ and $j$ is home.
Akers and Krishnamurthy show that: (1) one of these operations can always be applied; and (2) both operations decrease $M = D - (n-c)$ where 
$n$ is the number of vertices and $c$ is the number of cycles in the permutation $\pi$ defined by 
$\pi(i) = j$ if token $i$ is initially at $v_j$. 
Note that if $\pi(i) = i$ (i.e.,~$i$ is home) this forms a trivial cycle which counts in $c$.
Both aspects (1) and (2) of the proof are fairly straightforward.  For (2) they prove that a shove does not change $D$ but decreases $c$, whereas a happy swap decreases $D$ by 2 and changes $c$ by at most 1.
Their proof implies that $M$ is an upper bound on the minimum number of swaps. 
They do not claim that $M$ is at most twice the minimum, but this follows from the easy observation that
$M \le D$ and $D/2$ is a lower bound on the minimum number of swaps, since a single swap decreases $D$ by at most 2. 

Miltzow et al.~\cite{miltzow2016approximation} gave a 4-approximation algorithm for [coloured] token swapping on general graphs.  In case the graph is a tree, their algorithm is the same as the one of Akers and Krishnamurthy and they prove that it is a 2-approximation.

\paragraph*{The Vaughan-Portier algorithm.} 
Independently of the work by Akers and Krishnamurthy, Vaughan and Portier~\cite{vaughan1995algorithm} in 1995 gave an algorithm for token swapping on a tree that uses a number of swaps between $D/2$ and $D$ (in their notation $D$ is called ``PL''). The algorithm involves three operations: {\bf A}, a happy swap; {\bf B}, a version of a happy swap that alters the final token assignment; and {\bf C}, a variant of a shove.  The operations construct the swap sequence by adding swaps at the beginning and the end of the sequence, whereas the other algorithms construct the sequence from the start only.

Operation {\bf B} applies when there is an edge $(u,v)$ and tokens $i$ and $j$ such that 
the destination of $i$ is $u$ and the destination of $j$ is $v$ and $p(i)$ includes $v$ and $p(j)$ includes $u$,
i.e., the two tokens want to travel in opposite directions across the edge $e$ as the last steps in their paths.  
The operation exchanges the final destinations of $i$ and $j$, computes a swap sequence for this subproblem, and then adds the swap of $i$ and $j$ at the end of the sequence. 

Operation {\bf C} applies in the following situation. Suppose there is 
an edge $(u,v)$ with token $i$ on $u$ and token $j$ on $v$, where $p(i)$ 
includes $v$ and token $j$ is home.  Suppose furthermore that there 
is a token $k$ whose destination is $u$ and whose path $p(k)$ includes $v$. 
(Note that this is a more restrictive condition than for a shove.)
The operation exchanges tokens $i$ and $j$ and exchanges the final 
destinations of $j$ and $k$.  Recursively solve this subproblem.  
The swap sequence consists of the swap of $i$ and $j$, followed 
by the sequence computed for the subproblem, followed by the swap of $j$ and $k$.

Vaughan and Portier prove that if operations $\bf A$ and $\bf B$ do not apply, then operation $\bf C$ does, and they prove (this is easy) that each operation decreases the sum of the distances by 2.

\paragraph*{Cycle algorithm.}
The first explicit description of a 2-approximation algorithm for token swapping on trees was given by  
Yamanaka et al.~\cite{yamanaka2015swapping}, who gave an algorithm that sorts the cycles of the permutation one-by-one.  Consider a cycle of tokens $(t_1 t_2 \cdots t_q)$ in the permutation $\pi$.  For $i=1, \ldots, q-1$ their algorithm swaps token $t_i$ along the path from its current vertex to the vertex currently containing token $t_{i+1}$---but stops one short of the destination. Finally, token $t_q$ is swapped from its current vertex to its (original) destination. 

We now outline their proof of correctness and the bound on the number of swaps.
Suppose that token $t_1$ is currently at vertex $x$ and that the first edge it wishes to travel along is $e=(x,y)$.  
Let $j$ be the minimum index, $2 \le j \le q$ such that $t_j$ wishes to travel in the opposite direction along $e$ (and observe that $j$ exists).  
Then the cycle is equivalent to $(t_1 \cdots t_j)$ followed by $(t_j \cdots t_q)$, where the second cycle is empty if $j=q$.  Also, the algorithm performs the same swaps on these two cycles as on the original.
Thus it suffices to prove that their algorithm correctly solves the cycle $(t_1 \cdots t_j)$. This cycle has the special feature that no tokens besides $t_1$ and $t_j$ wish to traverse edge $e$.  
Yamanaka et al.~prove that their algorithm ``almost'' 
achieves the property that just before step $i$ (the step in which $t_i$ moves) tokens $t_1, \ldots, t_{i-1}$ are at their final destinations and all other tokens, including the non-cycle tokens, are at their initial positions. 
``Almost'' means that there is the following exception.  Let $z$ be the vertex containing $t_i$, and let 
$z'$ be the next vertex on the path from $z$ to $a$.  All the tokens on the path from $z'$ to $a$
are one vertex away from their desired positions---they should all be one vertex closer to $z$.
With this exception, the property is obvious for $i=1$ and $i=2$ and can be proved by induction, which implies that the algorithm is correct.
Because tokens are only ``off-by-one'', it can be argued that the number of swaps performed in step $i$ of the algorithm is bounded by the original distance from $t_i$ to its destination.   This implies that the total number of swaps is at most the sum of the distances of labels in the cycle, which gives the factor 2 approximation. 

\paragraph*{Comparisons.}  None of the algorithms will swap a token at a happy leaf, so there is an instance (see Section~\ref{sec:happy-leaf}) where the algorithms are not optimal. The three algorithms differ in how far they allow a token $i$ to stray from its path $p(i)$.
In the Happy Swap algorithm no token leaves the set of vertices consisting of its path together with the vertices at distance 1 from its destination.  
In the Cycle algorithm, no token moves more than distance 1 from its path.  
In the Vaughan-Portier algorithm, a vertex may go further away from its path.

\subsection{Token swapping on paths} 
Token swapping on a path is the classic problem of sorting a list by transposing adjacent pairs.  See Knuth~\cite[Section 5.2.2]{knuth1997art3}.
The minimum number of swaps is the number of inversions in the list. 
Curiously, a swap that decreases the number of inversions need not be a happy swap or a shove (as described above) and, on the other hand, there does not seem to be any measure analogous to the number of inversions that applies to trees more generally, or even to stars.

The diameter of the Cayley graph for token swapping on a path is $\Theta(n^2)$. 
Researchers have also studied the number of permutations with a given number of inversions~\cite{knuth1997art3}, and  
the relationship between the number of inversions and the number of cycles~\cite{edelman1987inversions,balcza1992inversions}.

\subsection{Token swapping on stars} A \emph{star} is a tree with one non-leaf vertex, called the \emph{center vertex}.  
We will need the following known result about token swapping on a star, which expresses the number of swaps as a function of the number of cycles in the permutation $\pi$.  The formula is often written with a delta term whose value depends on whether the center vertex is happy or not, but we will express it more compactly.

\begin{lemma}[{{\cite{akers1989group,portier-vaughan1990star,pak1999reduced}}}]
\label{lemma:stars}  The optimum number of swaps 
to sort an initial placement of tokens on a star is 
$n_U + \ell$, where $n_U$ is the number of unhappy leaves and 
$\ell$ is the number of cycles in the permutation that have length at least 2 and do not involve the center vertex.
\end{lemma}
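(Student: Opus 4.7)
The plan is to prove matching upper and lower bounds, both equal to $n_U + \ell$. The upper bound will come from an explicit sorting algorithm, and the lower bound from a potential function that changes by exactly $\pm 1$ per swap.

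For the upper bound, let $c$ denote the center and consider the cycle structure of $\pi$. If $c$ lies in a non-trivial cycle $(c, v_{i_1}, \ldots, v_{i_{m-1}})$, then performing the swaps $(c, v_{i_1}), (c, v_{i_2}), \ldots, (c, v_{i_{m-1}})$ in this order sorts the cycle in $m-1$ moves, making all $m-1$ leaves of the cycle together with $c$ happy. Once $c$ is happy, each remaining non-center cycle of length $k \geq 2$ can be sorted in $k+1$ swaps: one initial swap to load the center with a cycle-token, then $k-1$ swaps each delivering a cycle-token to its home vertex, and a final swap returning $c$'s original token. Summing gives a total of $(m-1) + \sum_i (k_i + 1) = n_U + \ell$ swaps when $c$ lies in an $m$-cycle, and $\sum_i (k_i + 1) = n_U + \ell$ swaps when $c$ is already happy.

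For the lower bound, define the potential $\Phi(\sigma) = n_U(\sigma) + \ell(\sigma)$ for any arrangement with associated permutation $\sigma$, where $n_U(\sigma)$ is the current number of unhappy leaves and $\ell(\sigma)$ is the current number of non-trivial cycles of $\sigma$ that avoid $c$. Then $\Phi(\pi) = n_U + \ell$ and $\Phi(\mathrm{id}) = 0$, so it suffices to show that every swap changes $\Phi$ by at most $1$, whence at least $n_U + \ell$ swaps are required. Every swap on a star is between $c$ and some leaf $v$ and so multiplies $\sigma$ by the transposition $(c, v)$; this either splits the cycle containing both $c$ and $v$, or merges $v$'s cycle into $c$'s. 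In a split, either $v$'s new cycle has length $\ge 2$, contributing a new non-center cycle and raising $\ell$ by $1$, or $v$ becomes a fixed point, lowering $n_U$ by $1$. In a merge, either $v$'s old cycle was non-trivial and is now destroyed, lowering $\ell$ by $1$, or $v$ was happy and is now absorbed into a cycle of length $\ge 2$ with $c$, raising $n_U$ by $1$. In all four cases exactly one of $n_U$ and $\ell$ changes by $1$, so $\Delta \Phi = \pm 1$.

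The main obstacle is executing this four-case analysis cleanly; the key observation that makes it work is that the cycle of $\sigma$ containing $c$ is never counted in $\ell$, so every change in $\ell$ corresponds to a non-center cycle being created or destroyed, and such a change is controlled by whether $v$'s new (resp.\ old) cycle is trivial.
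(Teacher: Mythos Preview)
Your argument is correct. The upper-bound construction matches the paper's sketch (solve each cycle in $k{-}1$ or $k{+}1$ swaps depending on whether it contains the center), but your lower bound is a genuine addition: the paper only offers a proof sketch that asserts ``the cycles are independent'' and cites prior work, whereas you give a self-contained potential-function argument showing every swap changes $\Phi = n_U + \ell$ by exactly $\pm 1$. The key observation that makes your case analysis clean---that composing with $(c,v)$ can only alter the fixed-point status of $c$ and $v$, and that the cycle containing $c$ is never counted in $\ell$---is exactly right, and it yields a tighter statement (each swap changes $\Phi$ by \emph{exactly} one) than the mere inequality $\Delta\Phi \ge -1$ needed for the lower bound. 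This buys you a rigorous proof where the paper relies on citation; the paper's per-cycle decomposition buys brevity and makes the connection to the formula more transparent, but does not actually justify optimality.
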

\begin{proof}[sketch]
Consider a cycle $C$ of length at least 2 in the permutation of tokens and consider the corresponding vertices of the star. If the center vertex is not in $C$ then the number of swaps to sort $C$ is its number of leaves plus one.  If the center vertex is in $C$ then the number of swaps is the number of leaves in $C$.
Because the cycles are independent, we can sum over all non-trivial cycles, which yields the stated formula.
\end{proof}

It follows that the diameter of the Cayley graph for a star is ${\frac 3 2}n + O(1)$, which arises when all cycles have length 2. 
Further properties of Cayley graphs of stars were explored by Qiu et al.~\cite{qiu1994star-pancake}.
Portier and Vaughan~\cite{portier-vaughan1990star} analyzed the number of vertices of the Cayley graph at each distance from the distinguished ``sorted'' source vertex (see also~\cite{wang2006distance}).  Pak~\cite{pak1999reduced} gave a formula for the number of 
shortest paths between two vertices of the Cayley graph.

\subsection{Transposition trees and interconnection networks}
The network community's interest in token swapping on trees (``transposition trees'') stems from the use of the corresponding Cayley graphs as interconnection networks, an idea first explored by Akers and Krishnamurthy \cite{akers1989group}.  Cayley graphs of transposition trees have the following desirable properties: they are large graphs ($n!$ vertices) that are vertex symmetric, with small degree ($n-1$), large connectivity (the same as the degree), and small diameter.  In particular, the diameter is ${\frac 3 2}n + O(1)$ when the tree is a star.  
The commonly used hypercube has $2^n$ vertices and diameter $n$, so the diameter is logarithmic in the size.  By contrast, 
the Cayley graph of a star has sublogarithmic diameter.

Akers and Krishnamurthy proved a bound on the diameter of the Cayley graph of a transposition tree, specifically, the maximum over all permutations of 
the bound $D - (n-c)$ which was discussed above.  This bound cannot be computed efficiently since it involves the maximum over $n!$ permutations.
Vaughan~\cite{vaughan1991bounds} also gave upper and lower bounds on the diameter of the Cayley graph, though neither easy to state nor to prove.

Follow-up papers by Ganesan~\cite{ganesan2012efficient}, Chitturi~\cite{chitturi2013upper} and Kraft~\cite{kraft2015diameters} have lowered the diameter bound and/or the time required to compute the bound.
To give a flavour of the results, we mention a polynomial-time computable upper bound, $\gamma$, due to Chitturi~\cite{chitturi2013upper} that is defined recursively as follows: if the tree is a star, use the known diameter bound; otherwise
choose a vertex $v$ that maximizes the sum of the distances to the other vertices, increase $\gamma$ by the maximum distance from $v$ to another vertex
 and recurse on the smaller tree formed by removing the leaf $v$. 


\subsection{Happy leaves} 
As mentioned in the introduction above, Vaughan~\cite{vaughan1991bounds} conjectured that a happy leaf in a tree need not be swapped in an optimal swap sequence.
In fact she made a stronger conjecture~\cite[Conjecture 1]{vaughan1991bounds} that if a tree has an edge $(a,b)$ such that no token wishes to cross $(a,b)$ (i.e., no path from a token to its destination includes edge $(a,b)$) then there is an optimal swap sequence in which no token swaps across $(a,b)$.  The Happy Leaf Conjecture is the special case where $b$ is a leaf.

Smith~\cite[Theorem 9]{smith1999factoring} claimed something stronger than the happy leaf conjecture: that no optimal swap sequence would ever swap a happy leaf.  But later he found an error in the proof~\cite{smith2011corrigendum}, and gave an example of a small tree where there is an optimal swap sequence that performs a swap on a happy leaf.  In his example, there is also an optimal swap sequence that does not swap the happy leaf so he did not disprove the happy leaf conjecture.

\subsection{Coloured token swapping}
\label{sec:background-colouredTS}
Many natural reconfiguration problems involve ``coloured'' elements, where two elements of the same colour are indistinguishable.  Token swapping for coloured tokens was considered by 
Yamanaka et al.~\cite{yamanaka2015colored} (journal version~\cite{yamanaka2018colored}).
They proved that the coloured token swapping problem is NP-complete for $c \ge 3$
colours even for planar bipartite graphs of max degree 3, but for $c=2$ the problem is solvable in polynomial time, and in linear time for trees.
On complete graphs, coloured token swapping is NP-complete~\cite{AmirP15,bonnet2017complexity} but 
fixed parameter tractable in the number of colours~\cite{yamanaka2018colored}. Coloured token swapping can be solved in polynomial time on a path~\cite{AmirP15,bonnet2017complexity} and on a star~\cite{bonnet2017complexity}.

\subsection{More general token and pebble games}
There are many problems similar to token swapping, some of which were studied long ago.  We mention some results but do not attempt a complete survey.

The classic 15-puzzle (see~\cite{Demaine2009games,van-den-heuvel-2013complexity} for surveys) is a version of token swapping on a $4 \times 4$ grid graph where only one special token (the ``hole'') can be swapped.  As a consequence, only half the token configurations (the alternating group) can be reached.  
Generalizing beyond the $4 \times 4$ grid to general graphs, Wilson~\cite{wilson1974graph} in 1974 gave a complete characterization of which token configurations on which graphs can be reached via this more limited set of moves.  Minimizing the number of moves is NP-complete~\cite{goldreich2011finding} even for grid graphs~\cite{ratner1990n2}. 
Recently, a version with coloured tokens has also been considered~\cite{yamanaka2017sequentialcolored}.  

Several papers explore generalizations where there is more than one ``hole''.  As in the 15-puzzle, a swap must involve a hole.   The tokens may be labelled or unlabelled.
For labelled tokens, Kornhauser et al.~\cite{kornhauser1984coordinating} in 1984  
gave a polynomial time algorithm to decide if reconfiguration between two label placements is possible, and proved a tight bound of $O(n^3)$ on the diameter of the associated Cayley graph.

For unlabelled tokens and holes Auletta et al.~\cite{auletta1999linear} gave a linear time algorithm to decide reconfigurability on any tree, and this was generalized to linear time for any graph (even for coloured tokens) by Goraly and Hassin~\cite{goraly2010multi}.
C\u{a}linescu et al.~\cite{calinescu2008reconfigurations} gave a polynomial time algorithm to minimize the number of swaps for any graph, 
but showed that the problem becomes APX-hard if the objective is to minimize the number of times a token moves along a path of holes.
Fabila-Monroy et al.~\cite{fabila2012token} showed that the diameter of the Cayley graph (which they call the ``token graph'') is at most $k$ times the diameter of the original graph.

Papadimitriou et al.~\cite{papadimitriou1994motion} considered a ``motion planning'' version 
where all the tokens are unlabelled ``obstacles'' except for one labelled ``robot'' token which must move from a start vertex to a destination vertex; as before, a swap must involve a hole.  
They showed that minimizing the number of swaps is NP-complete for planar graphs but solvable in polynomial time for trees.  The run time for trees was improved in~\cite{auletta2001optimal}.  

A unifying framework is to replace holes by labelled or coloured tokens that are ``privileged,'' and to require that every swap must involve a privileged token.

In another variant, token movement must be carried out by a single robot walking along the graph edges and carrying at most one token at a time.
Graf~\cite{graf2017sort} includes a good summary. 

Rather than token swapping across an edge, an 
alternative motion is rotation of tokens around a simple cycle in the graph. This is of interest in the robotics community since it models movement of robots with the restriction that no two robots can travel along the same edge at the same time. 
When all cycles in a graph may be used, there
are polynomial time algorithms to decide if reconfiguration is possible~\cite{yu2015pebble,foerster2017multi}.
See also~\cite{yu2016intractability} for hardness of optimization and~\cite{yu-lavalle2016optimal} for practical approaches.
If rotation is only allowed around the cycles of a cycle basis (e.g.,~the faces of a planar graph)
Scherphuis~\cite{scherphuis-rotation} provided a characterization (similar to Wilson's for the 15-puzzle generalization) of which graph/cycle-basis/token- placement combinations permit reconfiguration (see also Yang~\cite{yang2011sliding}, who showed that Wilson's result reduces to this result).

Although the term ``pebbling'' is sometimes used for token swapping, the more standard distinction is that \emph{pebbling games} are played on directed acyclic graphs and involve rules to add and remove pebbles, modelling memory usages or dependencies in computation.  Nordstr\"om~\cite{nordstrom2013pebble} wrote a thorough survey of pebbling games and their relevance to complexity theory.

\section{Counterexample to the Happy Leaf Conjecture}
\label{sec:happy-leaf}

In this section we disprove the Happy Leaf Conjecture by giving a tree with initial and final token placements such that any optimal swap sequence must swap a token on a happy leaf (recall that a happy leaf is one that has the correct token).  Our counterexample has $n=10$ vertices and is shown in  
Figure~\ref{fig:happy-swap-counter-ex}(a).
This is the smallest possible counterexample---we have verified by computer search that all trees on less than 10 vertices satisfy the Happy Leaf Conjecture.
Our counterexample can easily be generalized to larger $n$, and we give more counterexamples in the next section.
Our tree consists of a path $v_1, \ldots, v_9$ and one extra leaf $v_{10}$ joined by an edge to vertex $v_3$.  
The initial token placement has token 10 at $v_{10}$ (so $v_{10}$ is a happy leaf) and tokens $9, 8, \ldots, 1$ in that order along the path $v_1, v_2, \ldots, v_9$.

\begin{figure}[t]
    \centering
    \includegraphics[width=.9\textwidth]{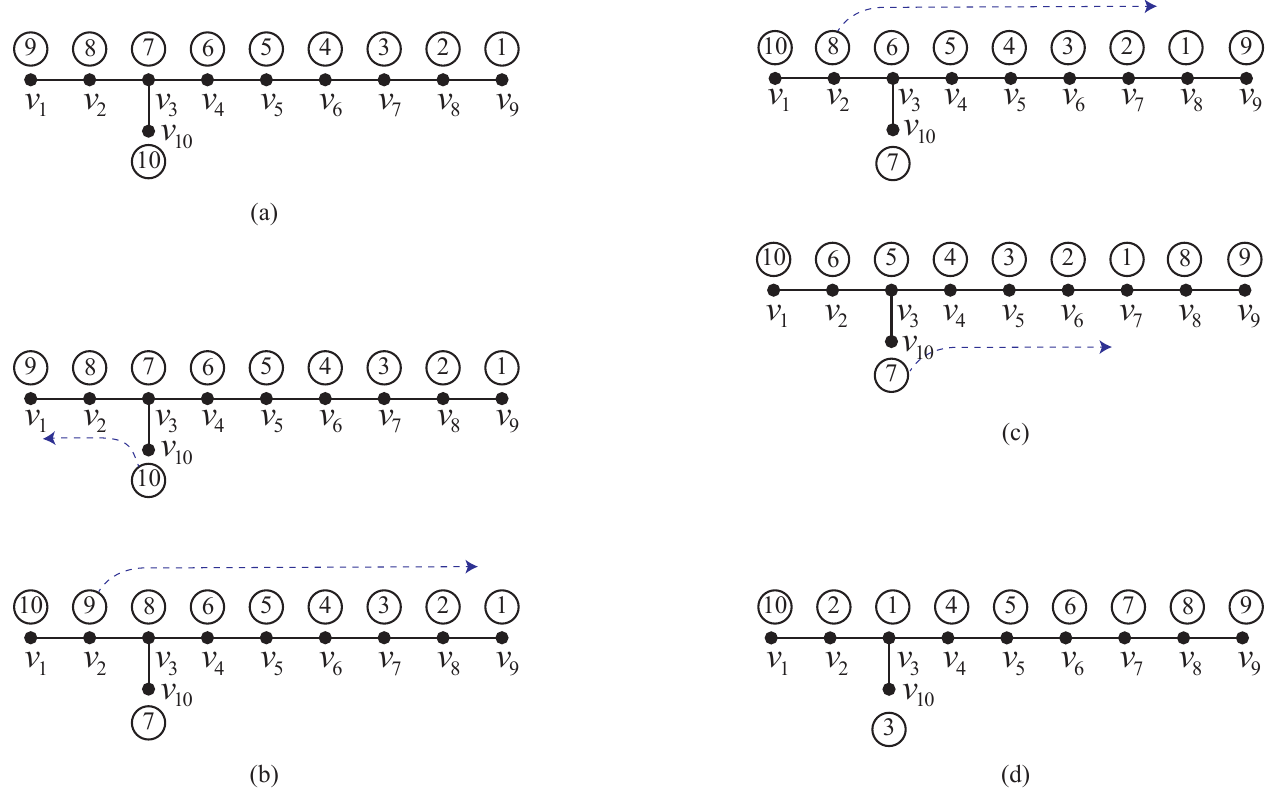}
    \caption{A counter-example to the happy leaf conjecture where an optimum swap sequence involves moving the happy token 10. (a) The initial tokens (in circles). (b) Three swaps move token 10 to $v_1$. (Dashed blue arrows show the upcoming moves.) (c) The result of homing tokens 9 and 8. (d) The result of homing tokens 9 through 4.  Four additional swaps will sort the tokens. }
    \label{fig:happy-swap-counter-ex}
\end{figure}

If token 10 does not leave vertex $v_{10}$ (i.e., we fix the happy leaf), then we must reverse the order of the tokens on a path of length 9, which takes $\binom{9}{2} =36$ swaps.
However, as we now show, there is a swap sequence of length 34.

Initially, we perform  3 swaps to move token 10 to $v_1$ giving the configuration shown in Figure~\ref{fig:happy-swap-counter-ex}(b). 
Next, we ignore leaf $v_1$ and perform the sequence of swaps that homes tokens $9,8, \ldots, 4$.  The result of homing tokens 9 and 8 is shown in Figure~\ref{fig:happy-swap-counter-ex}(c), and the result of homing all of them is shown in Figure~\ref{fig:happy-swap-counter-ex}(d).
It is easy to verify that this takes 7 swaps for token 9, 6 for token 8, $\ldots$ , 2 for token 4, which adds up to $7 + 6 + \cdots + 2 = 27$ swaps.

Finally, we perform the following swaps to complete the sort: home token 10 in 3 swaps, then home token 1 in 1 swap.
In total, this uses $3 + 27 + 4 = 34$ swaps.

The idea of why this saves swaps is as follows.  To reverse the order of edges on a path, every token must swap with every other token.  By contrast, the above approach saves swaps whenever two tokens occupy vertices $v_2$ and $v_{10}$.  For example, tokens 8 and 7 never swap with each other, nor do 7 and 6, etc. 
We need $n \ge 10$ so that this saving exceeds the cost of the initial set-up and final clean-up.

\section{Lower Bounds on Approximation}
\label{sec:2-approx}
 
In this section we prove that if an algorithm for token swapping on a tree never swaps a token on a happy leaf, then it has worst case approximation factor at least $\frac{4}{3}$.
In Section~\ref{sec:2-approx-algs} we described three 2-approximation algorithms for token swapping on a tree: the Happy Swap algorithm, the Vaughan-Portier algorithm, and the Cycle algorithm. All three algorithms fix the happy leaves, so the lower bound applies.
For the Happy Swap algorithm and the Cycle algorithm we improve the approximation lower bound to 2.  To do this we use further properties of those algorithms, in particular, constraints on how far a token may stray from the path between its initial and final positions.  The question of whether the Vaughan-Portier algorithm is also limited by factor 2 was recently answered~\cite{aichholzer2021hardness}.

\begin{theorem}
\label{thm:4-3-approx}
Any algorithm that does not move tokens at happy leaves has an approximation factor of at least $\frac{4}{3}$.
\end{theorem}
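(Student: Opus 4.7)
My plan is to exhibit a family of tree instances whose ratio between the shortest happy-leaf-fixing swap sequence and the unconstrained optimum tends to $\frac{4}{3}$. The counterexample in Section~\ref{sec:happy-leaf} is the conceptual seed---a single happy leaf there saves $2$ swaps out of $36$, for a ratio of only $36/34$---so to push the ratio up to $\frac{4}{3}$ the construction must amplify the saving into a constant multiplicative factor rather than a constant additive one.

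The construction I would use is a parametrized tree $T_n$ consisting of a long path $v_1, v_2, \dots, v_n$ with a linear number of extra leaves $\ell_1, \dots, \ell_m$ attached at chosen interior vertices. The initial placement puts the spine tokens in reverse order and puts each $\ell_i$'s token in the happy position. The leaves would be placed so that each one can act as an independent bypass in the spirit of Figure~\ref{fig:happy-swap-counter-ex}, letting us sidestep a block of pairwise spine swaps by temporarily ejecting the happy token into the spine and later returning it home.

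For the lower bound on any happy-leaf-fixing algorithm, I would observe that if no happy leaf is ever touched then no spine token can ever enter any $\ell_i$, so the problem restricts to sorting the reversed spine---exactly the classical adjacent-transpositions problem, which forces $\binom{n}{2}$ swaps. For the upper bound, I would build an explicit swap sequence that processes the $\ell_i$ in a carefully scheduled order, routing tokens through each leaf to excise a batch of $\Theta(n)$ swaps per leaf. With $m=\Theta(n)$ suitably placed leaves the cumulative saving would be $\Theta(n^2)$, giving an unconstrained optimum of at most $\frac{3}{4}\binom{n}{2}+o(n^2)$ and a ratio tending to $\frac{4}{3}$.

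The main obstacle is the upper bound: proving that the savings at different $\ell_i$ compose additively without mutual interference. Naively gluing disjoint copies of the 10-vertex counterexample gives only constant savings per copy and hence a ratio bounded below $\frac{4}{3}$, so the leaves must instead be coordinated through a single global swap schedule. I expect to verify correctness by induction on $n$ and to bound the cost either by a direct phase-by-phase swap count or, alternatively, by an inversion-style potential function adapted to the tree so that each bypass provably discharges its share of the $\frac{1}{4}\binom{n}{2}$ target saving.
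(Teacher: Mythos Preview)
Your plan has the right lower bound---the happy-leaf-fixing cost is exactly $\binom{n}{2}$ for a reversed spine---but the upper-bound construction is where the argument breaks down, and not merely in the sense of ``details to be filled in.'' Distributing the extra leaves along the path, as you propose, makes the required $\Theta(n^2)$ saving hard to obtain. A single leaf at a spine vertex gives you one parking slot: while one token sits there, others stream past and a handful of pairwise swaps are avoided, but the overhead of shuttling tokens in and out of that one slot eats almost all of the gain. This is precisely why the $10$-vertex example saves only $2$ swaps, and why your own remark about ``gluing disjoint copies'' giving only constant savings per copy is correct. Coordinating leaves at \emph{different} spine positions does not obviously change the picture: tokens still have to traverse the long stretches of bare path between leaves, and on those stretches every inversion must be paid for. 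The hoped-for inductive or potential argument would have to show that $\Theta(n)$ inversions are discharged per leaf, and there is no evident mechanism for that when the parking slots are spread out.

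The paper's construction sidesteps this entirely by concentrating all $k$ extra happy leaves at a \emph{single} vertex---the centre $c$ of a path of $2k+1$ vertices---rather than scattering them. This turns the centre into a star with $k$ leaves, and the whole point is that a star can reorder $k$ tokens in $O(k)$ swaps rather than $\Theta(k^2)$. The clever sequence cycles the three blocks of $k$ tokens (left half of the path, right half, star leaves) through that star in a few exchange phases of $\frac{1}{2}k^2+O(k)$ swaps each, for a total of $\frac{3}{2}k^2+O(k)$, against the happy-leaf-fixing cost of $\binom{2k+1}{2}=2k^2+k$. No interference analysis is needed because there is only one hub. So the missing idea is not ``coordinate many one-slot bypasses'' but ``replace many one-slot bypasses by a single $k$-slot bypass.''
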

\begin{proof}
Define a tree $T_k$ to have a path of $2k+1$ vertices, $p_k, p_{k-1}, \ldots, p_1, p_0, p'_1, \ldots, p'_k$ together with a set $L$ of $k$ leaves adjacent to the center vertex, $c = p_{0}$, of the path.  
The tokens at $c$ and the vertices of $L$ are already home, 
i.e., they have the same initial and final positions. 
The tokens on the path should be reversed, i.e., the token initially at vertex $p_i$ of the path has final position $p'_i$ and vice versa.
See Figure~\ref{fig:4-3-approx-example} for an illustration.

\begin{figure}[ht]
    \centering
    \includegraphics[width=.9\textwidth]{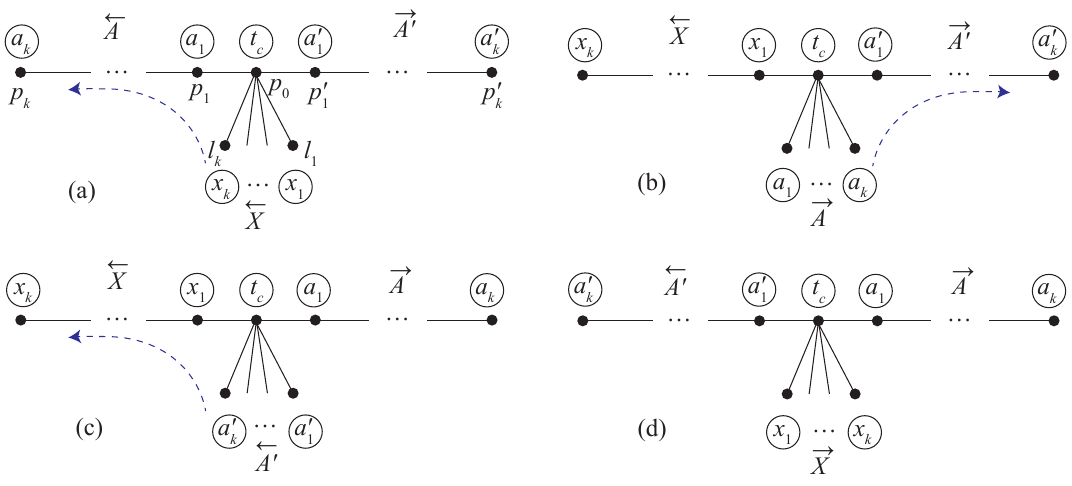}
    \caption{Illustration for Theorem~\ref{thm:4-3-approx}. 
    (a) The tree $T_k$ with initial token placement. 
    Notation $\protect\overleftarrow{A}$ indicates the order of the labels (in this case right to left). 
    The blue dashed arrow indicates the first move on tokens $X$ in step (1) of swap sequence $\cal S$.  
    (b) After step (1) $A$ and $X$ are exchanged. Note that $A$ is reversed.  (c) After step (2).  (d) After step (3).  It remains to reverse the tokens $X$. 
    }
    \label{fig:4-3-approx-example}
\end{figure}

Any algorithm that fixes happy leaves must reverse the path which takes $\binom{2k+1}{2} = 2k^2 + k$ swaps.

We now describe a better swap sequence $\cal S$.  Let $t_c$ denote the token at vertex $c$, 
let $A = \{a_1, a_2, \ldots, a_k\}$ denote the tokens initially at $p_1, \ldots, p_k$, let $X = \{x_1, \ldots , x_k\}$ denote the tokens initially at vertices $\ell_1, \ldots, \ell_k$ of $L$, and let $A' = \{a'_1, \ldots, a'_k\}$ denote the tokens initially at $p'_1, \ldots, p'_k$.
The plan is to perform the following 4 steps: (1) exchange the tokens $A$ with the tokens $X$; (2) exchange $A$ with $A'$; (3) exchange $A'$ with $X$; and (4) adjust the order of tokens $X$. 

For clarity of explanation, we will implement each step in a way that is suboptimal but efficient enough for our purposes. 
For step (1), we implement the exchange of the two token sets as follows:
first move token $t_c$ to $p_k$ using $k$ swaps, then for $i=k, \ldots, 1$ move token $x_i$ to vertex $p_{i-1}$ using $i$ swaps, and finally, move token $t_c$ from $p_k$ back to $c$ using $k$ swaps. This places tokens $X$ on the first half of the path in the correct order ($x_i$ on $p_i$), and tokens $A$ on vertices $L$ in reversed order ($a_i$ on $\ell_{k+1-i}$). The total number of swaps is $2k + \binom{k+1}{2} = \frac{1}{2}k^2 + \frac{5}{2}k$. 

Step (2) acts similarly on $A$ and $A'$, after which tokens $A$ are home, and tokens $A'$ are on vertices $L$ in the correct order ($a'_i$ on $\ell_i$).
Finally, after step (3), tokens of $A$ and $A'$ are home, and tokens $X$ are on $L$ but in reverse order.  
For step (4) we must sort the tokens on $X$ which means solving a tree swapping problem on a star of $k$ leaves, where the permutation consists of $k/2$ cycles each of length 2 ($x_i$ must be exchanged with $x_{k+1-i}$).  Each cycle takes 3 swaps for a total of $\frac{3}{2}k$ swaps.

The total number of swaps used is $\frac{3}{2}k^2 + 9k$.
This is an upper bound on the optimum number of swaps.
The ratio of number of swaps used by the algorithm to optimum number of swaps is at least $(2k^2 + k)/(\frac{3}{2}k^2 + 9k) = \frac{4}{3} - o(1)$.
\end{proof}

\begin{theorem}\label{thm:Tight2}
    The Happy Swap and the Cycle algorithms  
    do not have an approximation factor less than $2$.
\end{theorem}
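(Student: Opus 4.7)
The plan is to exhibit a family of tree instances $T_k$ such that both the Happy Swap algorithm and the Cycle algorithm use $(2 - o(1)) \cdot \mathrm{OPT}$ swaps. Beyond fixing happy leaves (which already yields the $4/3$ bound of Theorem~\ref{thm:4-3-approx}), I will exploit additional restrictions: in the Happy Swap algorithm, no token ever leaves the union of its path with the vertices at distance one from its destination, and in the Cycle algorithm, no token leaves the distance-one neighborhood of its path. Any region of the tree that is far from every token's natural path is therefore inaccessible to these algorithms, even though the optimum may use it freely.

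I would begin from a variant of the construction in Theorem~\ref{thm:4-3-approx}: a long path of $2k+1$ vertices with tokens that must be reversed, together with auxiliary happy structure attached to the centre whose vertices lie outside the above neighborhoods. Because the algorithms cannot use the auxiliary structure, both must reverse the main path, using exactly $\binom{2k+1}{2} = 2k^2 + k$ swaps. For the optimum, I would exhibit an explicit swap sequence that uses the auxiliary structure as parking: by temporarily displacing happy tokens to the centre, path tokens are given room to overtake one another via happy swaps. With careful coordination, essentially every swap in this sequence is a happy swap (reducing the total distance by two), so the total number of swaps is close to $D/2 = k^2 + k$.

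Combining these two estimates gives $\mathrm{ALG}/\mathrm{OPT} \geq (2k^2 + k)/(k^2 + o(k^2)) \to 2$. The main obstacle is driving the optimum close to $D/2$: the sequence used in Theorem~\ref{thm:4-3-approx} achieves only $\tfrac{3}{2} k^2$ swaps, which is why that construction yields ratio only $4/3$. To close this gap, the optimum sequence must be carefully designed so that almost every swap is happy---for instance, by pipelining the parking so that multiple pairs of path tokens meet and exchange in happy fashion simultaneously. The lower bounds for the algorithms are more routine, following from the ``no straying'' invariants applied to the construction, together with the observation that reversal of the main path requires $\binom{2k+1}{2}$ swaps regardless of which order the algorithm chooses.
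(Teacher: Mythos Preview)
Your plan has a genuine gap at exactly the point you yourself flag as ``the main obstacle'': you assert that the optimum for path reversal with an auxiliary happy structure can be driven down to $D/2 = k^2 + k + o(k^2)$, but you do not exhibit such a sequence, and the only sequence you reference (from Theorem~\ref{thm:4-3-approx}) achieves only $\tfrac{3}{2}k^2$. The vague appeal to ``pipelining the parking so that almost every swap is happy'' is not an argument; in fact there is no reason to believe the overhead of loading and unloading the leaves, which is itself $\Theta(k^2)$, can be absorbed in a single path-reversal instance. If $\mathrm{OPT}$ stays near $\tfrac{3}{2}k^2$ your construction only reproves the $4/3$ bound.

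The paper's proof does not try to squeeze more out of the single-path-reversal instance. Instead it changes the instance: it takes $b$ paths of length $k$ attached to a common centre (with $k$ happy leaves), and the target permutation cyclically shifts the tokens from path $P_i$ to $P_{i+1}$. The crucial point is that the $\Theta(k^2)$ overhead of routing tokens through the happy leaves is paid once but serves all $b$ branches, so the clever solution costs $(b+1)\bigl(\binom{k+1}{2}+2k\bigr)\approx \tfrac{1}{2}bk^2$ while $D/2\approx \tfrac{1}{2}bk^2$ as well; taking $b=k$ makes $\mathrm{OPT}\le \tfrac{1}{2}k^3+o(k^3)$. On the algorithm side, the paper shows that both the Happy Swap and the Cycle algorithm use at least $bk^2$ swaps---for Happy Swap via a charging argument exploiting that each $A_i$ must both traverse $P_i\cup P_{i+1}$ and reverse its order there without leaving that long path, and for the Cycle algorithm by direct counting over the $k$ cycles. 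This yields the ratio $2-o(1)$.

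In short, your lower bound on the algorithms ($2k^2+k$ swaps for full path reversal, since happy leaves are untouched) is fine; what is missing is a construction whose optimum is genuinely close to $D/2$, and for that a single reversed path with attached happy structure does not seem to suffice---you need the multi-branch amortisation idea.
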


\begin{proof}
Our example generalizes the one in the preceding proof to have $b$ paths of $k$ vertices rather than two paths. 
For every $k$ and every odd $b$ we define a tree $T_{k,b}$ together with initial and final token placements.
The tree $T_{k,b}$ consists of $b$ paths of $k$ vertices all joined to a center vertex $c$. 
Additionally, there is a set $L$ of $k$ leaves adjacent to $c$. 
See Figure~\ref{fig:T43} for an illustration.

\begin{figure}[ht]
    \centering
    \includegraphics{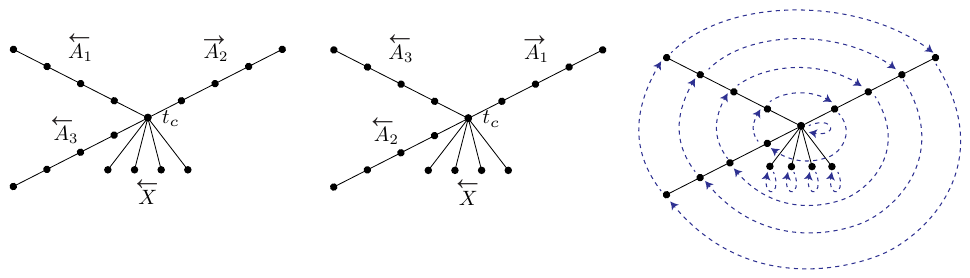}
    \caption{Left: The tree $T_{4,3}$ with initial token placement. 
    Notation $\protect\overleftarrow{A}$ indicates the order of the labels.
    Middle: The tree with its target token placement. Right: The tree with arrows that indicate for each token the initial and target placement.
    }
    \label{fig:T43}
\end{figure}

We denote the paths of $k$ vertices by $P_1,\ldots,P_b$ respectively. 
Let $A_i$ be the tokens initially on path $P_i$, let $X$ be the tokens initially on $L$, and let $t_c$ be the token on $c$. 

In the final token placement the tokens $A_i$ are on path $P_{i+1}$, indices modulo $b$. 
Each token in $A_i$ should have the same distance
to $c$ in the initial and final token placements.
The tokens $X$ and $t_c$ should stay where they are.

\paragraph{Idea of the proof.}
Any solution for token swapping on this input has to accomplish two tasks.
Each set of tokens $A_i$ has to be moved from its initial path $P_i$ to its final path $P_{i+1}$.
This takes roughly $k^2$ swaps, as $k$ tokens have 
to move $k$ steps. But each those swaps could accomplish something else 
for another token, so only half of those swaps should be charged
to this task.
The other task is to reverse the order of each set $A_i$. 
Now, we know that this takes $\binom k 2 \approx k^2/2$ swaps on 
a path. But if we can use the happy leaves, this can 
be done with $O(k)$ swaps and thus is asymptotically free.

More precisely, we will first give a clever solution using happy leaves to show that the optimum number of swaps, $S_{\it OPT}$, satisfies 
$S_{\it OPT} \le (b+1)(\binom {k+1}{2} + 2k)$.
Then we will argue that the number of swaps, $S_A$, used by either algorithm is at least $k^2 b$. 

To complete the proof, observe that
for $k=b$, $S_{\it OPT} \le \frac{1}{2}k^3 + o(k^3)$ and $S_A \ge k^3$.
Then for every $\varepsilon > 0$ there is a $k$
such that
$S_A / S_{\it OPT} > 2-\varepsilon$.
This shows that the algorithms do not have an approximation factor better than 2.

We now fill in the details of the clever solution, and the analyses of the algorithms. 

\paragraph{A Clever Solution.}  
We describe a procedure to swap every token to its target position using only $(b+1)(\binom {k+1}{2} + 2k)$ swaps. 
See Figure~\ref{fig:CleverSwapping} for an illustration.
\begin{figure}
    \centering
    \includegraphics{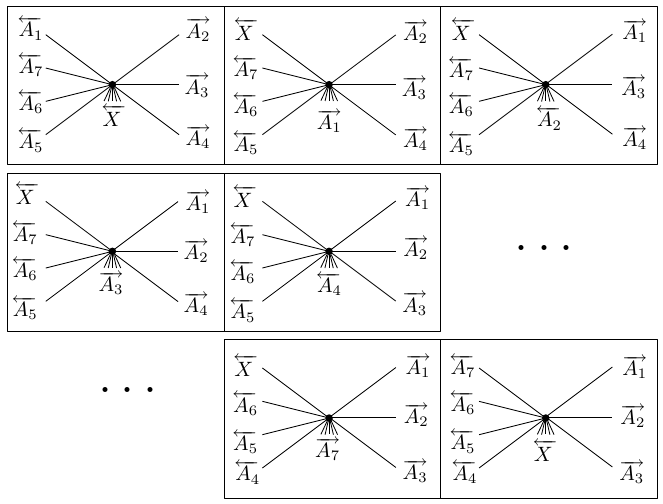}
    \caption{A clever way of swapping all the tokens.}
    \label{fig:CleverSwapping}
\end{figure}
As a first step, exchange the tokens $A_1$ with the tokens $X$.  The exchange is implemented as described in the proof of 
Theorem~\ref{thm:4-3-approx} and takes $\binom {k+1}{2} + 2k$ swaps.  
In the second step, exchange the tokens $A_1$ with the tokens $A_2$.  This places the tokens of $A_1$ on path $P_2$ in the correct order, and 
again takes $\binom {k+1}{2} + 2k$ swaps.
In general, the $i{\rm th}$ step exchanges tokens $A_{i-1}$ (currently on the vertex set $L$) with the tokens $A_i$, moving $A_{i-1}$ onto the path $P_i$ in the correct order.
In the $(b+1){\rm st}$ step, tokens $A_b$ are exchanged with tokens $X$, moving $A_b$ onto the path $P_1$.
If $b$ is odd, the tokens $X$ will end up in the correct order on $L$, and the total number of swaps is $(b+1)(\binom {k+1}{2} + 2k)$.

\paragraph{Behavior of the Happy Swap algorithm.}
We will prove that the Happy Swap algorithm uses at least $k^2 b$ swaps.
The tokens of $A_i$ must move from path $P_i$ to path $P_{i+1}$.
Suppose $P_i$ has vertices $p_1, \ldots, p_k$ where $p_1$ is adjacent to $c$.  Suppose $P_{i+1}$ has vertices $p'_1, \ldots, p'_k$ where $p'_1$ is adjacent to $c$. Let $a_j$ be the token whose initial vertex is $p_j$ and whose final vertex is $p'_j$.
The path from initial to final vertex is $p_j, p_{j-1}, \ldots, p_1, c, p'_1, \ldots, p'_j$, and its length is $2j$. We note the following properties:

\smallskip\noindent(*)
Until $a_j$ reaches $p'_j$, happy swaps and shoves can only move $a_j$ along its path.
Once $a_j$ reaches $p'_j$, a shove can move $a_j$ to
$p'_{j-1}$ or to $p'_{j+1}$, after which it must move back to $p'_j$. 

\smallskip\noindent
This implies, in particular, that $a_j$ cannot leave $P_i \cup P_{i+1}$.  

On the long path $P_i \cup P_{i+1}$ the tokens $a_j$ must reverse their order.  Because none of these tokens leave the long path, the algorithm must perform a swap for each of the $\binom{k}{2}$ inversions.  The algorithm must also perform some swaps to move tokens from $P_i$ to $P_{i+1}$.
We need a way to count these separately because a single swap can make progress towards both goals.
We will use a charging argument.

Each swap involves two tokens, and \emph{pays} 
$\frac{1}{2}$ to each of those two tokens. The number of swaps is the sum, over all tokens, of the payments to that token.  
Now we will charge the work performed to the tokens.  
Each of the $2j$ edges 
of $a_j$'s path charges $\frac{1}{2}$ to token $a_j$.  Each inversion $a_j, a_h$, $h > j$ charges 1 to $a_j$.  
The amount charged to $a_j$ is $j + (k-j) = k$.

We will prove below that for every token $a_j$, the payment is at least the charge.  
This implies that the total number of swaps is at least $bk^2$ since there are $bk$ tokens.

\begin{claim}
The payments to $a_j$ are at least the charges to $a_j$.
\end{claim}
\begin{proof}
The first time $a_j$ reaches its destination, it has been paid $\frac{1}{2}$ for each edge on its path.  This balances the charges for the path.

It remains to account for the inversions charged to $a_j$.
Consider an inversion $a_j, a_h$, $j<h$.  Some swap must change the order of these tokens in the path $P_i \cup P_{i+1}$, and, in this swap, $a_j$ moves in a direction opposite to the direction of its path from initial to final vertex.  Thus, by property (*), this swap can only happen after $a_j$ has reached its destination. 
In particular, there are two possibilities for this swap: (1) token $a_j$ is home at $p'_j$ and token $a_h$ is at $p'_{j-1}$ and a shove exchanges them; or (2) token $a_j$ is at $p'_{j+1}$ and token $a_h$ is at $p'_j$ and a happy swap exchanges them.  In case (1), there must be a subsequent swap where $a_j$ returns to its home.  In case (2) there must have been a previous swap where $a_j$ was shoved from its home to $p'_{j+1}$.  In both cases the extra swap moves $a_j$ in the direction of its path, and cannot fix an inversion.  Thus in both cases, there are two swaps to pay for the inversion.     
\end{proof}

\paragraph{Behavior of the Cycle algorithm.}
We will prove that the Cycle algorithm uses at least $k^2 b$ swaps.
The Cycle algorithm solves the cycles independently.  Each non-trivial cycle has $b$ tokens, one on each path.  
Consider the action of the algorithm on the cycle of tokens at distance
$i$ from $c$. Let $t_j$ be the token on path $P_j$ at distance $i$ from $c$. 
Suppose without loss of generality that the algorithm moves $t_1$ first.  
It swaps $t_1$ along $P_1$ to the center vertex $c$, which takes $i$ swaps, 
and then along $P_2$ from the center vertex to the vertex just before 
the one containing $t_2$, which takes $i-1$ swaps.  The positions of 
the other tokens on the cycle have not changed.  
Thus, each $t_j$ takes $2i -1$ swaps, and finally, $t_b$ takes $2i$ swaps.  The total over all cycles 
is $\sum_{i=1}^k (b(2i-1) + 1) = 2b \binom{k+1}{2} -bk + k  = bk^2 + k$.
\end{proof}


\section{Weighted Coloured Token Swapping is NP-complete}
\label{sec:NP-complete}

In the \emph{coloured} token swapping problem, we have coloured tokens, one on each vertex of the graph, and each vertex has a colour.  
The goal is to perform a minimum number of swaps to get each token to a vertex that matches its colour. 
We assume that the number of tokens of each colour is equal to the number of vertices of that colour. 
The standard token swapping problem is the special case when all colours are distinct.

For general graphs the coloured token swapping problem can be solved in polynomial time for 2 colours, but becomes  NP-complete for $k \ge 3$ colours~\cite{yamanaka2018colored}.
See Section~\ref{sec:background-colouredTS} for further background.

In the \emph{weighted} coloured token swapping problem,
each colour $c$ has a weight $w(c)$, and the cost of
swapping tokens of colours $c$ and $c'$ is 
$w(c) + w(c')$.  The goal is to reach the target configuration with minimum total cost.

In this section we prove that weighted coloured token swapping is NP-complete.  This result is subsumed by the very recent and difficult result that token swapping (with no colours or weights) is NP-complete~\cite{aichholzer2021hardness}. 
Our proof is much shorter, holds even when the trees are restricted to be spiders (i.e., subdivision of stars),  reduces from a different problem, and may be of value in trying to distinguish between hard and easy cases of coloured weighted token swapping.  There is currently no known class of trees where token swapping is in P but weighted coloured token swapping is NP-complete.

Our reduction is from vertex cover in graphs with maximum degree three~\cite{Garey:1979:CIG:578533}.  We construct a long path with some `green' 
tokens initially at the right end of the path. The final configuration has green tokens at the left end of the path.  We can save the cost of moving all those green tokens the whole length of the path by dislodging some happy green tokens from a subtree that dangles off the path; we construct this dangling subtree from the vertex cover instance in such a way that there is a cost savings if and only if there is a small vertex cover.   

\begin{theorem}
Weighted coloured token swapping on spiders is NP-complete.
\end{theorem}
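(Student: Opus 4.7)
The plan is to establish NP-completeness by reduction from Vertex Cover. Membership in NP is immediate: a swap sequence of polynomial length serves as a witness and its cost can be computed in polynomial time (we also observe that an optimal sequence has length at most $O(n^2)$, so the cost is polynomial in the bit length of the weights). For hardness, given a Vertex Cover instance $(G,k)$ with $G=(V,E)$, I will construct a tree $T$, a pair of coloured token configurations on $T$, and colour weights, so that the minimum swap cost is at most a threshold $B$ if and only if $G$ admits a vertex cover of size at most $k$.

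Following the outline in the introduction of this section, the tree $T$ consists of a long ``transport path'' $P$ together with a gadget subtree $S$ dangling from one endpoint of $P$. A set $\Gamma$ of green tokens sits initially at the far right of $P$ and must end at the left of $P$; the green colour is given a large weight $W_g$ while all other colours have weight $1$, so that the dominant contribution to the cost is green-token distance. The subtree $S$ encodes $G$: it contains one \emph{vertex gadget} $S_v$ for every $v \in V$, each holding a happy green token that can be dislodged at some fixed overhead cost, and one \emph{edge gadget} $S_e$ for every $e = uv \in E$, designed so that reaching the target configuration within $S_e$ is only possible (at cost $\le B$) if at least one of $S_u, S_v$ has had its green token dislodged. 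A dislodged green token in $S$ is re-used as a ``stand-in'' for a green token that would otherwise have to travel all of $P$: by choosing $|P|$ much larger than every internal cost in $S$, each dislodgement saves a net amount of roughly $|P|\cdot W_g$. Set $B$ to be the total cost of the (unique, cheap) solution that dislodges exactly $k$ vertex-gadget tokens.

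The forward direction is routine: given a vertex cover $C \subseteq V$ with $|C| = k$, dislodge precisely the green tokens in $\{S_v : v \in C\}$, route them to the left end of $P$, push $|C|$ of the original right-end green tokens only as far as the entry of $S$, and resolve each edge gadget using the endpoint in $C$. The main obstacle is the reverse direction: proving that any swap sequence of cost $\le B$ corresponds to a vertex cover. For this I would prove two structural lemmas. First, a ``no detours'' lemma: up to rearrangements that do not increase cost, no green token from $\Gamma$ traverses $P$ more than once and no token enters and exits $S$ without contributing to dislodging a vertex-gadget green, because each such detour costs at least $2 W_g$, which is chosen to exceed the savings available. Second, an ``edge gadget forcing'' lemma: each $S_e$ can be brought to its target configuration at cost $\le B$ only if the dislodgement pattern in $S_u \cup S_v$ matches at least one endpoint being included. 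Together these lemmas let me read off a cover $C$ from the dislodged vertex gadgets of any optimal solution, with $|C| \le k$. Making the ``no detours'' argument watertight, in particular ruling out clever multi-token reroutings through $S$ that amortize cost across several dislodgements, is the technical heart of the proof and will drive the exact choice of $W_g$, of the vertex- and edge-gadget weights, and of the path length $|P|$.
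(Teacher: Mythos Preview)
Your high-level architecture matches the paper's: a long path with tokens that must travel far, a dangling subtree encoding the Vertex Cover instance, and cost savings obtained by dislodging happy tokens in the subtree so as to shortcut the long path. The reverse direction via a ``no detours'' argument driven by the path length is also how the paper argues.

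However, there is a genuine gap at the heart of your construction: you assert the existence of an ``edge gadget $S_e$'' whose target configuration can be reached cheaply only if at least one of $S_u,S_v$ has been dislodged, but you never say what $S_e$ is or why it has this property. With a \emph{single} green colour this is not easy to arrange on a tree. All green tokens are interchangeable, so once any $k$ of them are dislodged from vertex gadgets, nothing in your description ties those $k$ choices to the edge structure of $G$; a subtree $S_e$ attached at one point of the tree has no natural way to ``see'' the state of two distinct vertex gadgets $S_u$ and $S_v$.

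The paper resolves this with two concrete devices you have not supplied. First, instead of one green colour it uses $|E|$ distinct \emph{edge-colours}: each edge $xy$ has its own colour, with exactly two tokens of that colour sitting in the subtree (one below the $V$-vertex for $x$, one below the $V$-vertex for $y$) and one more on the path that must end up at the far-left block. This is what forces, for every edge, that one of the two specific subtree tokens be released. Second, the heavy weight is placed not on the travelling tokens but on the \emph{blocking} tokens (the ``darkgray'' tokens on the $V$-vertices, weight $n^5$), while the edge-coloured tokens have weight $1$; the path length $L_r=n^7$ alone makes the full traversal prohibitive. The cost accounting then cleanly separates into (i) the unavoidable cost of shifting the red block, (ii) at most $q$ darkgray moves, and (iii) a small residual, and the contradiction for a non-cover comes from a single uncovered edge forcing a full-path traversal of cost $\ge L_r$. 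Your inverted weighting (heavy green, light everything else) could perhaps be made to work, but not without specifying an edge-gadget mechanism that plays the role of the per-edge colours; as written, the ``edge gadget forcing'' lemma is the entire reduction and it is currently a black box.
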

\begin{proof}
We will reduce the NP-complete problem vertex cover in a graph  with maximum degree three~\cite{Garey:1979:CIG:578533} to weighted coloured token swapping on spiders. Let the input to vertex cover be a graph $G$ with $n$ vertices, $m$ edges, and maximum degree three, together with a number $q$.  The problem is whether $G$ has a vertex cover of size at most $q$.

We first construct a spider $T$  with root $r$ and $2n$  children of the root.  Here $n$ of these children  correspond to the vertices of $G$ which we refer to as \emph{$V$-vertices} and the remaining  $n$ children, $o_1,\ldots,o_n$, are called  \emph{helper vertices},  e.g., see Figures~\ref{fig:hard}(a)-(b). For each vertex $v$ of $G$ we now add a path of  $4\deg (v)$ vertices connected to the $V$-vertex for $v$.  Specifically,
for each edge $(x,y)$ of $G$, we create eight  
vertices in $T$  and we add four of them  to the path of 
the $V$-vertex corresponding to $x$ and the other four to the path of the $V$-vertex corresponding to $y$. 
We will refer to these new  vertices as the \emph{$E$-vertices} of $T$. 

We place tokens on $T$  in the initial and final configurations (which are identical) as follows. 
For each vertex $w$ of $G$, we place a token of color $w$
 and weight $n^5$ on the corresponding $V$-vertex. In the schematic representation (see Figure~\ref{fig:hard}(c)) 
the tokens of $V$-vertices are shown in gray,  
but note that they are in fact a collection of $n$ different  colours. We place an orange token of weight 1 on each helper vertex. On the root of $T$ we place a blue token with weight 1.  On each of the eight 
$E$-vertices $(x,y)$ we place a token of colour $xy$ and weight 1. In the schematic representation (see Figure~\ref{fig:hard}(c)) 
the tokens of $E$-vertices are shown in green,  
but note that they are in fact a collection of $m$ different  colours, with eight tokens of each color. 
We will refer to these colours as \emph{edge-colours}.

We then construct a path $P$ of length  $(L_r+4m+(L_r-1)+4m)$, where 
$L_r = n^7$.  We identify the root of $T$ with the $(L_r+4m)^{\rm th}$ vertex of the path $P$. Let $T'$ be the resulting  spider. 

The initial configuration of tokens on $P$ is as follows (see Figure~\ref{fig:hard}(d)): 
\begin{enumerate}
    \item[-] the first $L_r$ vertices 
    starting from the left have red tokens of weight 1,  
    \item[-] the next $4m+(L_r-1)$ vertices have blue tokens of weight 1,
    \item[-] the next $4m$ vertices have four tokens of each of the $m$ edge-colours, each of weight 1.
\end{enumerate}

The final configuration of tokens on path $P$ is shown in Figure~\ref{fig:hard}(g) 
(ignore parts (e) and (f) for now).  The red tokens are now at the right end of the path and the edge-coloured tokens are at the left end.

\begin{figure}[pt]
\centering
\includegraphics[width=.87\textwidth]{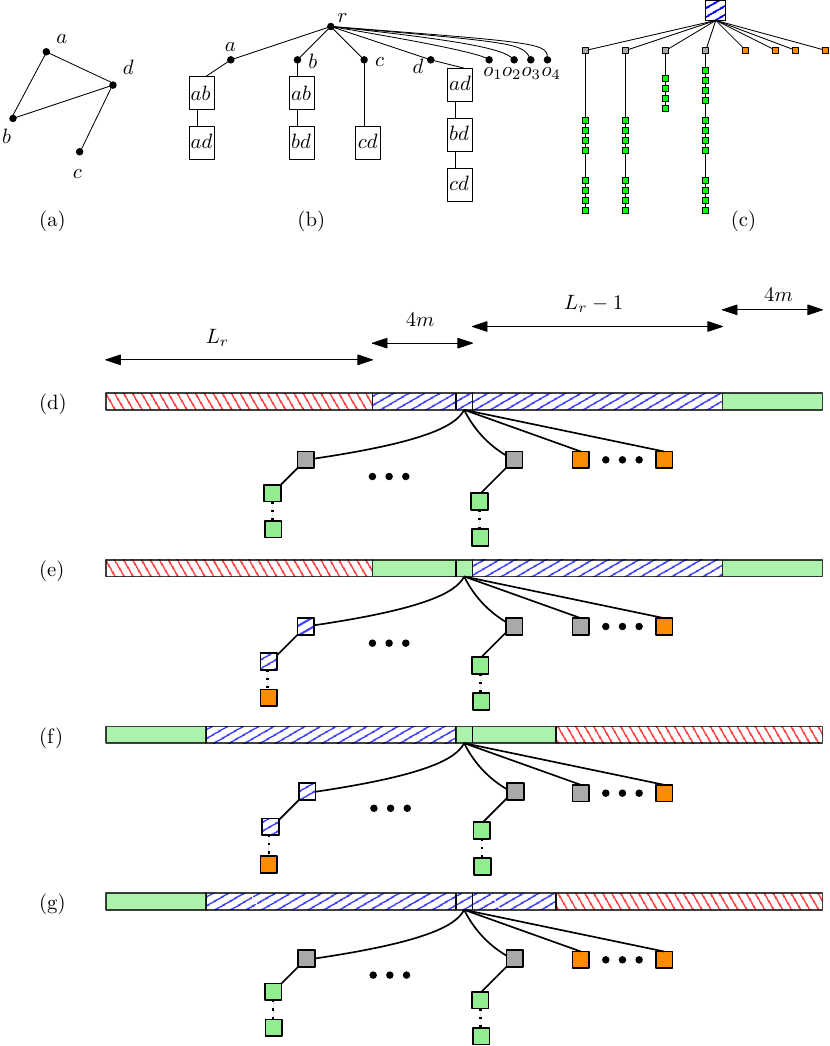}
\caption{Illustration for the NP-hardness proof: (a) input graph for vertex cover; (b) schematic of  spider  $T$; (c) details and initial configuration of coloured tokens on $T$;  
(d) spider  $T'$ with initial configuration of coloured tokens;
(e)--(f) swapping from initial to final configuration of $T'$; (g) $T'$ with final configuration of coloured tokens.  Note that, rather than drawing tokens alongside the vertices as in the other figures, we are simply colouring the vertices by their initial/final/current token colours.  
}
\label{fig:hard}
\end{figure}

Let $\alpha = 2q(1+n^5) + 24q$, $\beta = 16m^2 + 92m$, $\gamma = 2L_r(8m + (L_r - 1))$. In the following we show that $G$ has a vertex cover of size $q$ if and only if 
the weighted coloured token swapping problem 
on $T'$ has a solution of cost at most  $2\alpha + 2\beta + \gamma$.

\paragraph{From Vertex cover to Weighted coloured Token Swapping:}
Assume that $G$ has a vertex cover $\mathcal{C}$ of size $q$. We reach the target configuration as follows: 
\begin{description}
    \item[Move green tokens from $T$ to $P$, Fig.~\ref{fig:hard}(d)--(e):] 
    There are two substeps:
    \begin{enumerate}
    \item For each $V$-vertex $v$ of $\mathcal{C}$ move its gray token  
    to a helper vertex.  
    This requires two swaps, each with a cost of $(1+n^5)$. After the gray token at $v$ moves, an orange token ends up at the root of $T$.  We then move this orange token to the leaf descending from $v$.
    (Note that this returns the original blue token to the root of $T$.)
    Since a root-to-leaf path in $T$ contains at most 12 vertices, this takes
    at most 12 swaps, each with a cost of 2.
    Since there are $q$ gray tokens, the total cost of the first step is $2q(1+n^5) + 24q = \alpha$.

    \item Let $\mathcal{D}$ be the edge-coloured (green) tokens on the paths descended from the $V$-vertices of $\mathcal{C}$.  
    In the second step, we move $4m$
    tokens of $\mathcal{D}$, 
     four of each colour, into
    the path $P$, 
    swapping them with $4m$ blue tokens from the path.
    We move the tokens one after the other in the order they appear at the end of the path $P$ in the final configuration. 
    Hence the first edge-coloured token needs to travel at most $12$ steps to the root of $T$ plus $4m-1$ steps along $P$, i.e., $(4m+11)$ steps, the next one needs to travel at most $12$ steps to the root plus $4m -2$ steps along $P$, i.e.,  $(4m+10)$ steps and so on. Hence the total cost of the second step is at most $ 2((4m+11)+(4m+10) + \ldots + 12) = (4m+11)(4m+12)-132 = 16m^2 + 92m = \beta$.
    \end{enumerate}
    The total cost is $\alpha + \beta$.

    \item[Move red tokens along $P$, Fig.~\ref{fig:hard}(e)--(f):] Move the red tokens all the way to the right end of  $P$. Moving each token takes $8m+(L_r-1)$ swaps. Hence for $L_r$ tokens, this costs $\gamma = 2L_r(8m+(L_r-1))$.
    Note that this has the effect of moving the green tokens that were initially at the far right to the root of $T$.

    \item[Move green tokens from $P$ to $T$, Fig.~\ref{fig:hard}(f)--(g):]
    This step is the reverse of the first one.  We swap the edge-coloured (green) tokens on $P$ with the blue tokens that are inside $T$, and then move the orange and gray tokens back to their initial ($=$ final) positions. The cost is $\alpha + \beta$. 
\end{description}

The total cost to reach the final configuration is $2\alpha + 2\beta + \gamma$.

\paragraph{From Weighted coloured Token Swapping to Vertex cover:} Assume that the weighted coloured token swapping problem on $T'$ has a solution $\mathcal{S}$ of cost at most $2\alpha + 2\beta + \gamma$. We now show how to find a vertex cover of size at most $q$.

Observe first that we need to pay at  least a cost of $\gamma$ for moving the red tokens from the left to the right end of $P$. This leaves a remaining cost of at most 
 $2\alpha + 2\beta$, which is what we analyze in the rest of the proof.
 
If $\mathcal{S}$ does not move the tokens of $V$- and $E$-vertices (the non-root vertices of $T$), then we 
need to move the edge-coloured (green) tokens from the right end of $P$ to the left end of $P$. Moving even one green token from the right end of $P$ to one of the leftmost $4m$ positions of $P$ costs at least $L_r = n^7$. One can verify that $n^7 > 2(16(3n)^2+92(3n))+4n(1+n^5)+48n$ (for every $n\ge 5$), so for $m \le 3n$ (a loose upper bound) and $\beta = 16m^2+92m$, we obtain 
$L_r  > 2\alpha + 2\beta $. Consequently, all the green tokens that are   at the left end  of $P$ in the final configuration must come from the $E$-vertices. Since the gray tokens are `blocking' the  $E$-vertices, $\mathcal{S}$ must have moved some of these gray tokens. 
 
 Observe that a gray token must move an even number of steps in order to return to its original vertex.  Let $\mathcal C$ be the set of $V$-vertices whose gray tokens take at least $4$ steps. We prove that $\mathcal C$ forms a vertex cover and has size at most $q$.
 
 To prove that $\mathcal C$ is a vertex cover, consider an edge $(x,y)$ of $G$. We must move four tokens of colour $xy$ from $T$ to $P$.  These tokens must come from the path descending from the $V$-vertex for $x$ or for $y$.  Suppose without loss of generality that at least two tokens of colour $xy$ come from the path descending from $x$.  We claim that the gray token initially at $x$ makes at least 4 steps and thus is in $\mathcal C$.  If the gray token exits the path from the root of $T$ to the leaf descended from $x$, then this takes at least 2 steps out and 2 steps back for a total of at least 4 steps.  If it does not exit this path, then it must make at least 1 step downward to let each of the two $xy$ coloured tokens pass, and 1 step back up for a total of at least 4 steps.
 
We now show that the size of $\mathcal{C}$ is at most $q$. Suppose for a contradiction that $\mathcal{C}$ contains $(q+1)$ or more vertices. Since each vertex in $\mathcal{C}$ corresponds to a gray token that takes at least 4 steps, the cost for $(q+1)$ such gray tokens must be at least $(q+1)4(1 + n^5)$. One can verify that $4(1+n^5) > 2(16(3n)^2+92(3n))+48n$ (for every $n\ge 5$), so for   $m \le 3n$ and $\beta = 16m^2+92m$, we obtain $(q+1)4(1 + n^5) = 4q(1 + n^5) + 4(1 + n^5)
> 4q(1 + n^5) + 2\beta+48q = 2\alpha+2\beta$.  

The membership in NP comes from the observation that any solution with the minimum number of swaps requires only a quadratic number of swaps. Hence a certificate (i.e., a set of token swap operations) can be verified in quadratic time by executing the given swaps and checking whether the total cost is within the cost bound.
\end{proof}

\section{Weighted Coloured Token Swapping}
\label{sec:weight-colour}
In this section we give polynomial-time algorithms for weighted coloured token swapping on paths and stars.
Recall from the previous section our convention that we have coloured tokens and coloured vertices, with one token at each vertex, and with the number of tokens of each colour equal to the number of vertices of that colour.  
The goal is to perform swaps to get each token to a vertex that matches its colour.
Each colour $c$ has a weight $w(c)$ and the cost (or weight) of performing a swap on two tokens of colour $c$ and $c'$ is $w(c) + w(c')$. The objective is to minimize the total cost (weight) of the swaps.
Note that standard token swapping is the special case where all the colours are distinct and all the weights are $\frac{1}{2}$, since each swap moves two tokens.

A main result we will prove is that for paths and stars, colours and weights can be dealt with separately. 
More precisely, we prove that the following approach is correct. First we ignore the weights and find the optimum assignment of tokens to destination vertices (of the correct colour).  
For this, we use the known algorithms for coloured token swapping on paths~\cite{AmirP15,bonnet2017complexity}
and on stars~\cite{bonnet2017complexity}.
After that, we solve a standard weighted token swapping problem.

We note that such a separation of colours and weights does not hold for trees in 
general, as the NP-hardness proof in the previous section shows. 
However, when the number of colours is 2, the weights and colours do separate---we should never swap two tokens of the same colour, and therefore every swap costs $w(c_1) + w(c_2)$ where $c_1$ and $c_2$ are the two colours.  This means that, for 2 colours, weighted coloured token swapping is no harder than coloured token swapping.  
Yamanaka et al.~\cite{yamanaka2018colored} gave 
a polynomial-time algorithm for 2-coloured token swapping on general graphs. Thus, weighted 2-coloured token swapping can also be solved in polynomial time.

Our main result in this section is an algorithm for weighted coloured token swapping on stars.  Before that, we give a brief solution for paths.

\subsection{Weighted coloured token swapping on  paths}

Coloured token swapping on a path can be solved in polynomial time~\cite{AmirP15,bonnet2017complexity}.  This can be seen as follows. 
As mentioned above, we should never swap two tokens of the same colour.   As noted independently in several sources~\cite{yamanaka2018colored,AmirP15,bonnet2017complexity}, for the case of paths, this constraint imposes a unique assignment of tokens to vertices: the $i^{\rm th}$ token of colour $c$ along the path must be assigned to the $i^{\rm th}$ vertex of colour $c$.  

It remains to solve the weighted token swapping problem on paths. 
As in the unweighted case, the required swaps correspond precisely to the inversions, i.e., the pairs of tokens $t, t'$ whose order in the initial token placement differs from their order in the final token placement. The minimum weight of a swap sequence is the sum, over all inversions $t, t'$ of $w(t) + w(t')$.


\subsection{Weighted coloured token swapping on stars}
\label{sec:WC-stars}

In this section we give a polynomial time algorithm for the weighted coloured token swapping problem on a star. 
As announced above, we will show that weights and colours can be dealt with separately.  We use the algorithm for coloured token swapping on a star given by Bonnet et al.~\cite{bonnet2017complexity}.
 
\subsubsection{Weighted token swapping on a star}

In this subsection we assume that every token has a distinct colour so we know exactly which vertex every token must move to. 
Each token $t$ has a weight $w(t)$ and the cost of swapping tokens $t$ and $t'$ is $w(t) + w(t')$.
Let $H$ and $U$ be the sets of tokens initially on the happy and unhappy leaves, respectively.
 Let $A$, the set of \emph{active} tokens, be all tokens except those in $H$, i.e., $A$ is $U$ plus the token at the center vertex.

In the token permutation, the cycle that contains the token at the center vertex of the star will be called the 
\emph{unlocked cycle}, and all other cycles will be called \emph{locked cycles}.
Using this terminology, Lemma~\ref{lemma:stars} states that the optimum number of swaps to solve the unweighted token swapping problem is $n_U + \ell$, where $n_U = |U|$ and $\ell$ is the number of non-trivial locked 
cycles. The intuition for the lemma, and the reason for our terminology, 
is that every locked cycle must be `unlocked' by an external token,
introducing one extra swap per locked cycle. 

The number of swaps performed in the weighted case must be at least $n_U + \ell$ and we will show that an optimum solution uses either this lower bound or two extra swaps.  
The idea is the following: each of the
locked cycles must be unlocked by some other token, and we want to use
the cheapest possible token for this. 
Either we will use an active token and perform $n_U + \ell$ swaps, 
or we will introduce two extra swaps  
that bring and return a globally
cheapest token from an initially happy leaf to the star center and
use this token to unlock all the locked cycles.

\smallskip\noindent
{\bf Notation.} The following notation will be used throughout Section~\ref{sec:weight-colour}.
 
Let $X$ be the unlocked cycle.
Let $x$ be a minimum weight token in $X$, $a$ be a minimum weight token in $A$, and $h$ be a minimum weight token in $H$ ($h$ might not exist if there are no happy leaves).
Observe that $w(a) \le w(x)$. 
As above, let $\ell$ denote the number of non-trivial locked cycles in the input token permutation.
Finally, let $d(t)$ be the distance of token $t$ from its home
and let $D_w = \sum_{{\rm token}\;t}{w(t)d(t)}$.  
Observe that $D_w$ is a lower bound on the cost of weighted token swapping.     

Before presenting the algorithm, we give an alternative formula for $D_w$.  We will use this in the forthcoming section on weighted coloured stars.  Also, it implies that  in the case of unit weights, $D_w = 2n_U$, which 
will clarify how the present algorithm generalizes the unweighted case.
For vertex $v$, recall our notation that $s^{-1}(v)$ is the initial token at $v$, and  $f^{-1}(v)$ is the final token at $v$.  Thus, a leaf vertex $v$ is happy if and only if $s^{-1}(v) = f^{-1}(v)$.

\begin{claim}
\label{claim:alternate-D}  
%
$D_w = \sum_{v \in U} ( w(s^{-1}(v)) + w(f^{-1}(v)))$.
\end{claim}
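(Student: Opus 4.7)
The plan is to expand $D_w = \sum_t w(t)\, d(t)$ using the fact that on a star every distance $d(t)$ is either $0$, $1$, or $2$. Tokens that are at home contribute nothing, and the nonzero contributions split into three disjoint classes: (A) the center token $t_c$ when its home is a leaf (so $d(t_c)=1$); (B) tokens at unhappy leaves whose home is the center ($d=1$); and (C) tokens at unhappy leaves whose home is another leaf ($d=2$).

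First I would regroup the nonzero terms so that each unhappy leaf $v$ accounts for exactly one copy of $w(s^{-1}(v))$, giving a baseline term $\sum_{v \text{ unhappy}} w(s^{-1}(v))$. The leftover ``extra'' contribution then consists of $w(t_c)$ (from class (A), if nonempty) together with one additional $w(s^{-1}(v))$ for each leaf $v$ in class (C).

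The key step is to identify this leftover with $\sum_{v \text{ unhappy}} w(f^{-1}(v))$. For this I would consider the map $v \mapsto f^{-1}(v)$ on the set $N_U$ of unhappy leaves. For unhappy $v$, the token $f^{-1}(v)$ cannot be at $v$ itself, so it sits either at the center---in which case it must be $t_c$, whose home is then the leaf $v$, matching class (A)---or at some leaf $v' \neq v$, in which case $s^{-1}(v')$ has home $v$ (a leaf different from $v'$), placing $v'$ in class (C). Conversely, every token appearing in class (A) or (C) has an unhappy leaf as its home, so it equals $f^{-1}(v)$ for that unique $v \in N_U$. Since $f$ is a bijection, this yields a weight-preserving bijection between the leftover contributions and the multiset $\{f^{-1}(v) : v \in N_U\}$, so the leftover equals $\sum_{v \in N_U} w(f^{-1}(v))$. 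Adding the baseline gives the claim.

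The only real obstacle is the bookkeeping: verifying the bijection in both directions and making sure the case where $t_c$ is home (so class (A) is empty) is handled uniformly with the main argument.
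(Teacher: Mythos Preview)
Your argument is correct, and the underlying case analysis (tokens at the center going to a leaf, tokens at a leaf going to the center, tokens moving leaf-to-leaf, tokens already home) is the same as the paper's. The paper, however, organizes this as a cleaner token-by-token double count: for each token $t$ it simply checks that $t$ contributes the same amount to both sides---$2w(t)$ if both its initial and final vertices are leaves, $w(t)$ if exactly one of them is the center, and $0$ if $t$ is home. This sidesteps your baseline/leftover decomposition and the explicit bijection between the leftover terms and $\{f^{-1}(v):v\in N_U\}$, though what you wrote is a valid (if slightly longer) repackaging of the same idea.
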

\begin{proof} If $t$ is an unhomed token whose initial and final vertices are both leaves, then it contributes $2w(t)$ to both sides of the equation.
If $t$ is a token whose initial vertex is the center vertex and whose final vertex is a leaf, then it contributes $w(t)$ to both sides.
Similarly, a token whose initial vertex is a leaf and whose final vertex is the center, contributes $w(t)$ to both sides.
Finally, a token that is home contributes 0 to both sides.
\end{proof}

\begin{corollary}
\label{cor:unit-weight-D}
When the weights are all 1, $D_w = 2n_U$.
\end{corollary}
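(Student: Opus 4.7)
The plan is to derive this as an immediate corollary of Claim~\ref{claim:alternate-D}. Setting every weight $w(\cdot) = 1$ in the formula of that claim, each term in the sum over unhappy leaves becomes $w(s^{-1}(v)) + w(f^{-1}(v)) = 1 + 1 = 2$. Summing over the $n_U$ unhappy leaves yields $D_w = 2 n_U$. No additional machinery is needed beyond the claim just proved.

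As a consistency check, I would also verify the corollary directly from the definition $D_w = \sum_t w(t)\,d(t)$. With unit weights this reduces to the total displacement $D = \sum_t d(t)$ on a star, where each unhomed token travels either distance $1$ (when its initial or final vertex is the center) or distance $2$ (leaf-to-leaf). A short case analysis on the cycle structure of the underlying permutation, splitting into the unlocked cycle containing the center and the locked cycles that avoid it, matches these contributions against the $n_U$ unhappy leaves and recovers $D = 2 n_U$. I do not expect any obstacle here, since the corollary is simply the specialization of Claim~\ref{claim:alternate-D} to unit weights; the only care needed is to confirm the bookkeeping for the center vertex, which is uniformly accounted for by $s^{-1}$ and $f^{-1}$ in the claim's statement.
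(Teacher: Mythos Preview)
Your proposal is correct and matches the paper's approach exactly: the corollary is stated in the paper without a separate proof, as it follows immediately from Claim~\ref{claim:alternate-D} by setting all weights to $1$, so that each unhappy leaf contributes $2$ to the sum. Your additional consistency check via the direct definition of $D_w$ is sound but unnecessary.
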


We now describe the algorithm for weighted token swapping on a star. The algorithm uses the best of three possible strategies, all of which begin the same way:

\begin{description}
\item{Strategy 1.}  Begin solving the unlocked cycle $X$ by repeatedly swapping the token from the star center to its home until
the token $x$ is on the star center. 
Next, use $x$ to unlock and solve all the locked cycles. Finally, complete solving $X$.
The total weight is $D_w +2w(x)\ell$.  

\item{Strategy 2.} This strategy only
applies when  $w(a) < w(x)$, in which case $a \in U \setminus X$.   Begin solving the unlocked cycle $X$ by repeatedly swapping the token from the star center to its home until
the token $x$ is on the star center.  
Then swap $x$ with $a$.  Suppose $a$ was in the locked cycle $L$.  
Use $a$ to unlock and solve all the other locked cycles, leaving tokens of $X$ and $L \setminus \{a \}$ fixed.
Then use $a$ to solve cycle $L$, which will return $x$ to the center token.
Finally, complete solving $X$.
The effect is that one locked cycle is unlocked by $x$ at a cost of $2w(x)$ and $\ell -1$ cycles are unlocked by $a$ at a cost of $2w(a)(\ell -1)$, for 
a total cost of $D_w +2w(x) + 2w(a)(\ell-1)$.

\item Strategy 3.  This strategy only applies when $h$ exists.
Begin solving the unlocked cycle $X$ by repeatedly swapping the token from the star center to its home until
the token $x$ is on the star center.  
Then swap $x$ with $h$.  Use $h$ to unlock and solve all the locked cycles, leaving tokens of $X$ fixed.
Then swap $h$ and $x$.  Finally, complete solving $X$.
The total weight is $D_w + 2w(x) + 2w(h) + 2w(h)\ell = D_w + 2w(x) + 2w(h)(\ell + 1).$
\end{description}

To decide between the strategies we find the minimum of $w(x)(\ell -1)$, $w(a)(\ell -1)$, $w(h)(\ell + 1)$ and use the corresponding strategy 1, 2, or 3, respectively, achieving a total weight of $D_w + 2w(x) + 2\min\{w(a)(\ell -1), w(h) (\ell + 1)\}$. 

\begin{theorem}
\label{thm:min-weight-star}
The above algorithm finds a minimum weight swap sequence and the weight of the swap sequence is:
\[
D_w + 2w(x) + 2\min\{w(a)(\ell -1), w(h) (\ell + 1)\}.
\]
\end{theorem}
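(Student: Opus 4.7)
The plan is to prove matching upper and lower bounds on the cost of weighted token swapping on a star.

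\textbf{Upper bound (the algorithm achieves the stated cost).} I would verify by direct counting that each of the three strategies produces a valid swap sequence of the claimed cost. In every strategy, each ``productive'' swap advances two tokens one step each toward their homes, so the total weight of productive swaps is exactly $D_w$ by Claim~\ref{claim:alternate-D}. All other swaps are ``helper moves'' that bring an unlocking token in and out of the $\ell$ locked cycles. Strategy~1 charges $2w(x)\ell$ to $x$ for the helper round-trips; Strategy~2 pays $2w(x)$ to install (and later undo) the cheaper helper $a$ and then $2w(a)(\ell-1)$ for $a$'s round-trips through the remaining locked cycles; Strategy~3 installs and removes the happy-leaf helper $h$ at cost $2w(x)+2w(h)$ and spends $2w(h)\ell$ on $h$'s round-trips through the cycles. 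Since $w(a)\le w(x)$, Strategy~2 dominates Strategy~1 whenever it applies, so the algorithm's minimum over the three strategies equals $D_w + 2w(x) + 2\min\{w(a)(\ell-1),\,w(h)(\ell+1)\}$.

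\textbf{Lower bound (no sequence does better).} For any swap sequence of total cost $W$, write $W=\sum_t w(t)s(t)$, where $s(t)$ is the number of swaps involving token $t$. Because each swap moves $t$ by at most one edge, $s(t)\ge d(t)$, and summing against $w(t)$ gives $W\ge D_w$. The unweighted version of Lemma~\ref{lemma:stars} gives $\sum_t s(t)\ge 2(n_U+\ell)$, and combining with Corollary~\ref{cor:unit-weight-D} yields a global excess bound $\sum_t (s(t)-d(t))\ge 2\ell$. The heart of the proof is sharpening this into the \emph{weighted} bound $\sum_t w(t)(s(t)-d(t))\ge 2w(x) + 2\min\{w(a)(\ell-1),\,w(h)(\ell+1)\}$.

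To do so, I would carry out a ``helper accounting''. For each locked cycle $L$, at least one token not in $L$ must appear at the center while an $L$-token is being moved; call such a token a \emph{helper} for $L$. Each helper contributes at least two units of excess per cycle it helps (entering and later exiting that cycle), and a helper drawn from a happy leaf additionally contributes two units of excess to travel from its leaf to the center and back. Installing any helper also requires displacing the current occupant of the center; when that occupant is an $X$-token, this forces at least two extra involvements on some $X$-token, charging at least $2w(x)$ by minimality of $x$ in $X$. A case analysis on the helper structure of the swap sequence (all-active, all-happy, or mixed) then shows that the minimum weighted excess is realized by one of the three strategies.

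The main obstacle is the rigorous helper-accounting, in particular ruling out hybrid strategies that split helping duties across many cheap tokens. I expect this to reduce to an exchange argument: given any swap sequence, one can reassign helper roles to a single cheapest candidate of each type without increasing the weighted excess, because the round-trip contribution of introducing an additional helper always exceeds what is saved by relieving the primary one. Proving that the $2w(x)$ term is genuinely unavoidable is the subtlest step, and I would handle it by tracking the identity of the center token throughout the swap sequence and showing that some $X$-token must be pushed out to a leaf at least once beyond what the resolution of $X$ itself demands.
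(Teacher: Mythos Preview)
Your plan follows the same overall architecture as the paper: the upper bound is exactly the three strategies, and the lower bound is an excess-move accounting with a case split on whether any happy leaf moves. The paper's execution of the lower bound, however, is considerably simpler than what you propose. Rather than a per-cycle ``helper accounting'' plus an exchange argument to rule out hybrid strategies, the paper argues directly: the unweighted optimum (Lemma~\ref{lemma:stars}) forces at least $2\ell$ excess token moves; the \emph{first} swap that takes an $X$-token off the center (which must occur since some non-$X$ token must reach the center) together with its return accounts for $2w(x)$ of this excess, leaving $2(\ell-1)$ moves each costing at least $w(a)$ if no happy leaf moves. If a happy leaf \emph{does} move, the paper simply invokes Lemma~\ref{lem:happy-star}, which immediately gives two extra swaps (four extra token moves) beyond the $2\ell$; two of these are charged to $X$, two to $H$, and the remaining $2\ell$ each cost at least $\min\{w(a),w(h)\}$.

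The key point you are missing is Lemma~\ref{lem:happy-star}: it eliminates the need for any exchange argument, because once you know that touching a happy leaf costs two extra swaps \emph{globally}, the hybrid case is automatically dominated. Your proposed exchange argument (``reassign helper roles to a single cheapest candidate'') could probably be pushed through, but it is doing by hand what Lemma~\ref{lem:happy-star} gives for free, and your sketch does not yet make clear how to avoid double-counting when an active helper's own cycle is among the $\ell$ locked ones. Likewise, your justification of the $2w(x)$ term (``some $X$-token must be pushed out to a leaf at least once beyond what the resolution of $X$ itself demands'') is correct in spirit but vaguer than the paper's one-line ``first swap leaving $X$'' argument.
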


Observe that in
the case of unit weights, $D_w = 2n_U$ by Corollary~\ref{cor:unit-weight-D}, so
the theorem says that the minimum number of token moves is  $ 2n_U + 2 + 2(\ell - 1) = 2n_U + 2\ell$, i.e., the number of swaps is $n_U + \ell$, which matches what we know for the unweighted case.

To prove the theorem, we will need the following result about the unweighted star.

\begin{lemma} 
\label{lem:happy-star} 
Any swap sequence on an unweighted star that moves
a happy leaf does at least two more swaps than an optimal swap
sequence.
\end{lemma}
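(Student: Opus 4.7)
The plan is to identify the first swap that disturbs a happy leaf, show that this single swap pushes the optimum-to-go up by exactly one, and combine this with the standard fact that a single swap can drop the optimum-to-go by at most one. Together these give the required overhead of two.

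More concretely, let $\sigma = \sigma_1, \ldots, \sigma_m$ be any swap sequence that ends up moving the token $t$ initially at a happy leaf $v$, and let $\sigma_j$ be the first swap that touches $t$. Since every swap on a star uses an edge incident to the center $c$ and $t$ stays at $v$ until first touched, $\sigma_j$ must be the swap along the edge $(c,v)$. For each prefix state $S_i$ (the configuration after $i$ swaps), set $\Phi(S_i) := n_U(S_i) + \ell(S_i)$; by Lemma~\ref{lemma:stars} this is exactly the optimum number of swaps remaining from $S_i$.

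The heart of the argument is to analyze $\Phi(S_j) - \Phi(S_{j-1})$. Just before $\sigma_j$, the leaf $v$ is still happy (so it contributes $0$ to $n_U$) and $(v)$ is a trivial fixed point of the current permutation (so it contributes $0$ to $\ell$). The swap $\sigma_j$ moves $t$ to $c$ and deposits the previous center token onto $v$; that token cannot be $t$, hence cannot belong to $v$, so $v$ becomes unhappy and $n_U$ increases by one. Cycle-wise, $\sigma_j$ simply absorbs the trivial cycle $(v)$ into the cycle currently containing $c$; that cycle already contained $c$ and therefore remains the unlocked cycle after the swap, while no locked cycle is touched. Hence $\ell$ is unchanged and $\Phi(S_j) = \Phi(S_{j-1}) + 1$.

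The rest is bookkeeping. A single swap can decrease $\Phi$ by at most one, because prepending that swap to an optimum continuation at $S_i$ gives a valid continuation at $S_{i-1}$ of length $\Phi(S_i) + 1$, so $\Phi(S_i) \ge \Phi(S_{i-1}) - 1$. Iterating this yields $\Phi(S_{j-1}) \ge \Phi(S_0) - (j-1)$, whence $\Phi(S_j) \ge \Phi(S_0) - j + 2$. Since the remaining $m - j$ swaps must solve the problem at $S_j$, we have $m - j \ge \Phi(S_j)$, giving $m \ge \Phi(S_0) + 2 = \mathrm{OPT} + 2$. The only delicate step is the cycle-count claim in the third paragraph, which I expect to be the main obstacle in a careful write-up; once one observes that the cycle containing $c$ stays unlocked and that locked cycles are untouched by a swap whose other endpoint was a trivial fixed point, the rest of the estimate is immediate.
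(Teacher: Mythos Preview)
Your proof is correct and follows essentially the same approach as the paper: both hinge on the observation that the first swap touching a happy leaf (necessarily a swap with the center) increases the formula $n_U+\ell$ by exactly one, since $n_U$ goes up by one while the merged cycle still contains the center and so $\ell$ is unchanged. The paper states this observation tersely and leaves the surrounding bookkeeping (that each swap drops the optimum-to-go by at most one, and hence the bad swap costs two overall) implicit; you spell it out with the potential $\Phi$, which is a welcome clarification but not a different idea.
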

\begin{proof}
By Lemma~\ref{lemma:stars}, solving the unweighted problem on a star optimally takes $n_U + \ell$ swaps, where $n_U$ is the number of unhappy leaves and $\ell$ is the number of non-trivial locked cycles. It suffices to check that after swapping a happy token with the center token the value given by the formula is increased by one. Indeed, the number of non-trivial locked cycles stays the same and the number of unhomed leaves increases by one, hence, the net change is +1.
\end{proof}

We now prove Theorem~\ref{thm:min-weight-star}.

\begin{proof}
The swap sequence found by the algorithm realizes the formula given in the theorem.
It remains to show that the formula provides a lower bound on the weight of any swap sequence. 

To reach its home, each token $t$ must contribute weight at least $w(t)d(t)$, for a total over all tokens of $D_w$.
If $\ell = 0$ then $U - X$ is empty so
$w(a) = w(x)$ and  
the formula evaluates to $D_w$, which is a lower bound.  Assume from now on that $\ell \ge 1$.
In addition to the moves accounted for in $D_w$, there must be at least $2 \ell$ other token moves, two for each locked cycle.
Furthermore, there must be a first move that swaps some token $t$ of $X$ with a token outside $X$. This swap can only happen when $t$ is at the center vertex.  Since $t$ will then be unhomed, there must be a move that returns token $t$ back to the center vertex.  The minimum weight for each of these moves is $w(x)$, and this provides the term $2w(x)$ in the lower bound.  Subtracting these two moves from the required $2 \ell$ moves leaves  
$2 (\ell -1)$ moves still to be accounted for.

We now consider two cases depending whether a token of $H$ is moved or not.
If no token of $H$ is moved in the swap sequence, then the best we can do for the remaining $2(\ell -1)$ moves is to use a minimum weight token 
from $A$, so the weight is at least $D_w + 2w(x) + 2(\ell -1)w(a)$.

Next, consider swap sequences that move a token of $H$.  
By Lemma~\ref{lem:happy-star}, the sequence must do at least two extra swaps, i.e., at least 4 extra token moves.  Thus the number of moves (beyond those for $D_w$)
is at least $2\ell + 4$. As argued above, we need two moves for a token of $X$, costing at least $w(x)$ each.  We also need two moves for a token of $H$ (to leave its home and then return) costing at least $w(h)$ each.  This leaves $2 \ell$ further moves.  
This gives weight at least $D_w + 2w(x) + 2w(h) + 2\ell \min \{w(a), w(h) \}$. 
If $w(h) \ge w(a)$ then this lower bound is higher than the previous ones and becomes irrelevant.  If $w(h)< w(a)$, then the bound becomes  
$D_w + 2w(x) + 2w(h) + 2\ell w(h) = D_w + 2w(x) + 2w(h)(\ell +1)$.

Thus, combining these possibilities, we have a lower bound of $D_w + 2w(x) + 2\min\{w(a)(\ell -1), w(h) (\ell + 1)\}$, which completes the proof. 
\end{proof}

\subsubsection{Coloured token swapping on a star} 
\label{coloured ts on star}

Recall that in the coloured token swapping problem, tokens and vertices
are assigned colours, possibly with multiple tokens and vertices sharing
the same colour, and the aim is to move the tokens to vertices of
corresponding colours using a minimum number of swaps.

Bonnet et al.~\cite{bonnet2017complexity} gave a polynomial time algorithm for coloured token swapping on a star.  We use their algorithm, but in order to prove (in Section~\ref{sec:weights-colours-star}) that colours and weights separate  we need some stronger properties of the algorithm, so we present the details.

The algorithm finds a \emph{token-vertex assignment} that maps each token $t$ of colour $c$ to a vertex $v$ of colour $c$, with the interpretation that token $t$ should move to vertex $v$ in the token swapping problem.  
Such an assignment yields a standard token-swapping problem which, by Lemma~\ref{lemma:stars}, requires $n_{U}+\ell$ swaps, where $n_{U}$ is the number of unhappy leaves
and $\ell$ is the number of non-trivial locked cycles.
Thus, we want a token-vertex assignment that minimizes $n_{U}+\ell$.  Note that minimizing $n_U$ is the same as maximizing $n_H$, the number of happy leaves
since $n_U = n-1-n_H$, where $n$ is the number of vertices in the star.

An optimum token-vertex assignment is found using an auxiliary multigraph 
$G$ that has a vertex for each colour and one edge for each vertex of the star: for a vertex of colour $c$ with an initial token of colour $d$, add a directed edge from $c$ to $d$ in $G$.  In case $c=d$ this edge is a loop.
See Figure~\ref{fig:coloured-star} for an example.

\begin{figure}[t]
    \centering
    \includegraphics[width=\textwidth]{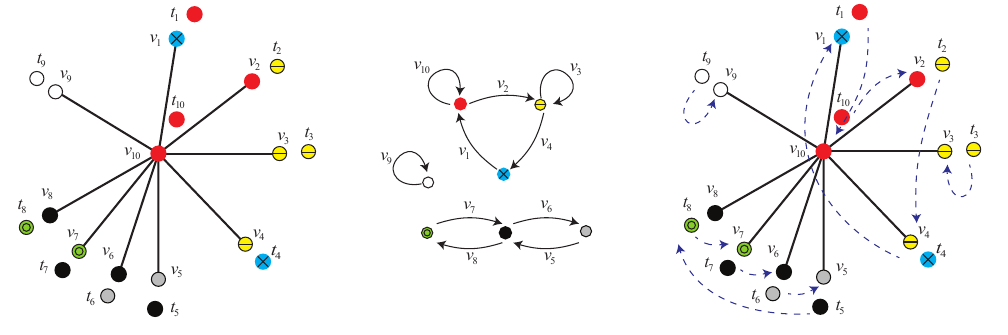}
    \caption{Left: an input for coloured token swapping on a star. The token at a vertex is drawn as a disc near the vertex.  A token must move to a vertex of the same colour.  Middle: the multi-graph $G$ with edges labelled by the corresponding vertex of the star.  There are 3 loops but one of them corresponds to the center vertex of the star, so  $\lambda=2$.  There are 3 connected components, but one is trivial, and one contains the edge corresponding to the center vertex so  $\kappa=1$.  Right: a token-vertex assignment (shown by the dashed arrows) that minimizes $n_U + \ell$.   
    One may also observe that assigning token $t_{10}$ to the center vertex $v_{10}$  or token $t_7$ to vertex $v_8$ are both sub-optimal. }
    \label{fig:coloured-star}
\end{figure}

Let $\lambda$ be the number of \emph{leaf loops} of $G$---loops corresponding to leaves of the star.  Any leaf loop corresponding to leaf vertex $v$ and token $t$ can be turned into a happy leaf by assigning token $t$ to vertex $v$.  This maximizes the number of happy leaves, $n_H$.

In the input, the number of vertices of colour $c$ is equal to the number of tokens of colour $c$.  Thus, each vertex of $G$ has in-degree equal to out-degree, which implies that any connected component in $G$ is strongly connected and has a directed Eulerian tour.
We call a connected component \emph{trivial} if it has one vertex.
Let $\kappa$ be the number of non-trivial connected components of $G$ not counting the component that contains the edge corresponding to the center vertex of the star.  

The algorithm for coloured token swapping on a star is as follows:
\begin{enumerate}
\item Find a token-vertex assignment:
\begin{enumerate}
\item Construct the multigraph $G$.
\item For each of the $\lambda$ leaf-loops, assign its token to its vertex. 
\item Remove the leaf-loops from $G$ to obtain $G'$. Observe that $\kappa$ is unchanged, and $G'$ is still Eulerian.
For each connected component of $G'$
find an Eulerian tour that traverses all the edges of the component.  
Convert each Eulerian tour to a token-vertex assignment as follows:
 Suppose the edges of the tour are labelled by vertices $v_1, v_2, \ldots, v_b$ (we are freely re-labelling vertices to ease the notation), and suppose that the edge of $G$ labelled $v_i$ goes from colour $c_{i-1}$ to colour $c_i$ (subscript addition modulo $b$). Then vertex $v_i$ has colour $c_{i-1}$ and the colour of its initial token, say $t_i$, is $c_i$.  The next edge in the tour  corresponds to vertex $v_{i+1}$ of colour $c_i$.  
Assign token $t_i$ to vertex $v_{i+1}$.  Note that both have colour $c_i$.  This assignment is well-defined since the edges of the walk correspond to distinct vertices with distinct initial tokens.
Note that this token-vertex assignment introduces a cycle 
$t_1, t_2, \ldots t_b$ in the corresponding token permutation.
 \end{enumerate}
\item Solve the (un-coloured) token swapping problem determined by the computed  token- vertex assignment.
\end{enumerate}
 
This algorithm produces a token-vertex assignment with $\lambda$ happy leaves, and $\kappa$ non-trivial locked cycles, one for each non-trivial connected component of $G'$
except the component that contains the edge corresponding to the center vertex of the star. In other words, $n_H = \lambda$, $n_U = n-1-\lambda$ and $\ell = \kappa$.  
Thus the number of swaps is $(n-1-\lambda) + \kappa$ by Lemma~\ref{lemma:stars}.

 The following lemma states two properties that are used to prove that the algorithm is correct, and a third property that will be needed in the following subsection when we consider weights.

\begin{lemma} 
\label{lem:token-assign-properties}
Any token-vertex assignment $T$ has the following properties:
\begin{enumerate}
    \item $T$ has at most $\lambda$ happy leaves.
    \item $T$ has at least $\kappa$ non-trivial locked cycles.
    \item If $T$ has $\lambda$ happy leaves then the tokens in the unlocked cycle of $T$ are a subset of $X_A$, where $X_A$ is the set of tokens that are in the unlocked cycle resulting from the above algorithm. 
\end{enumerate}
\end{lemma}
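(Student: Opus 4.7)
My plan is to exploit a bijection between the star's vertices and the edges of the auxiliary multigraph $G$. Each star vertex $v$ corresponds to a directed edge $e_v$ from the colour of $v$ to the colour of the token initially at $v$. Given any token-vertex assignment $T$, consider a cycle $v_1\to v_2\to\cdots\to v_b\to v_1$ of the permutation of vertices induced by $T$ (where $v_{i+1}$ is the vertex receiving the initial token at $v_i$). Since $T$ sends tokens to vertices of matching colour, the head of $e_{v_{i+1}}$ equals the tail of $e_{v_i}$, so $e_{v_1},\ldots,e_{v_b}$ forms a directed closed trail in $G$ using pairwise distinct edges. Trivial permutation cycles correspond to loops, and the trail containing $e_{v_{\rm center}}$ corresponds to the unlocked cycle. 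Thus $T$ partitions the edge set of $G$ into closed trails, with locked cycles corresponding precisely to the trails avoiding $e_{v_{\rm center}}$.

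For part~1, a happy leaf is a leaf $v$ fixed by $T$, which forces $e_v$ to be a loop; since there are only $\lambda$ leaf loops, $T$ has at most $\lambda$ happy leaves. For part~2, fix one of the $\kappa$ non-trivial connected components $C$ of $G$ that does not contain $e_{v_{\rm center}}$. Every closed trail of $T$ that uses an edge of $C$ stays inside $C$ (trails cannot cross components) and hence corresponds to a locked cycle. Because $C$ has at least two vertices, it contains a non-loop edge; any such edge lies in a trail of length $\geq 2$, which yields a non-trivial locked cycle. Summing over all $\kappa$ such components gives at least $\kappa$ non-trivial locked cycles.

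For part~3, suppose $T$ attains $n_H = \lambda$. By part~1 every leaf loop must be realised as a trivial closed trail (a happy leaf), so the non-trivial closed trails of $T$ partition the edges of $G'$. The unlocked cycle of $T$ corresponds to the closed trail of $T$ containing $e_{v_{\rm center}}$, and its edges lie in the connected component $C^\star$ of $G'$ containing $e_{v_{\rm center}}$. The algorithm, in contrast, builds $X_A$ from a full Eulerian tour of $C^\star$, which uses every edge of $C^\star$. Translating back from edges to their labelling star vertices and then to initial tokens, the tokens in the unlocked cycle of $T$ form a subset of $X_A$.

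The only subtle step is justifying that the permutation cycles of $T$ correspond bijectively to an edge-disjoint trail decomposition of $G$; once this correspondence is in place, parts~1--3 reduce to easy observations about loops, connected components, and Eulerian tours in directed multigraphs. A minor case to verify is when $e_{v_{\rm center}}$ itself happens to be a loop, but then both $X_A$ and the unlocked cycle of $T$ must lie in the same component containing this loop, so the argument for part~3 applies unchanged.
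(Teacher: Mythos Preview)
Your approach is essentially the same as the paper's: both establish that the permutation cycles of a token-vertex assignment $T$ correspond to edge-disjoint closed walks in $G$ covering all edges, and then read off the three properties from loops, non-trivial components, and the component containing the center edge. One small slip to fix: you write ``the head of $e_{v_{i+1}}$ equals the tail of $e_{v_i}$,'' but what actually holds is that the head of $e_{v_i}$ equals the tail of $e_{v_{i+1}}$ (since the colour of the token initially at $v_i$ equals the colour of $v_{i+1}$); the conclusion that $e_{v_1},\ldots,e_{v_b}$ is a closed trail is unaffected.
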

\begin{proof}
1. Happy leaves only arise from leaf loops so $T$ has at most $\lambda$ happy leaves.

2. The token permutation corresponding to $T$ can 
be expressed as a set $\cal C$ of cycles.
We claim that each cycle $C \in {\cal C}$ corresponds to a closed walk $\bar C$ of the same size in $G$ and that every edge of $G$ is in $\bar C$ for some $C \in {\cal C}$.
This will prove property 2, because it 
implies that we need at least one cycle for each connected component in $G$, and more precisely, that we need at least one non-trivial locked cycle for each of the components counted in $\kappa$.

Consider an edge of $G$, say the edge corresponding to the vertex whose initial token is $t_1$.
Token $t_1$ appears in some cycle $C \in {\cal C}$, say $(t_1, t_2, \ldots, t_b )$.
(We are freely re-naming tokens, vertices, and colours in this proof.)
Suppose token $t_i$ has colour $c_i$ and is initially at vertex $v_i$.  
Then the cycle moves token $t_i$ to vertex $v_{i+1}$ (subscript addition modulo $b$). Since the token-vertex assignment respects the colours, vertex $v_{i+1}$ has colour $c_i$.  
Also, vertex $v_{i+1}$ has initial token $t_{i+1}$ of colour $c_{i+1}$.  Thus there is a corresponding edge $c_i, c_{i+1}$ in $G$.  Therefore, the cycle corresponds to a closed walk in $G$.  Also, this closed walk uses the edge we began with, the one whose initial token is $t_1$.

3. The unlocked cycle of $T$ is the one that contains the token $t_c$ initially on the center vertex $u$.  By the argument above, the tokens in the unlocked cycle must come from the connected component of $G$ that contains the edge labelled with $u$.  This set of tokens consists of $X_A$ together with some tokens of leaf-loops.  But if $T$ has $\lambda$ happy leaves, then all the leaf-loops have been turned into happy leaves, so the set of tokens is reduced to $X_A$.  Thus, the tokens of the unlocked cycle are a subset of $X_A$.   
\end{proof}

 From this lemma the correctness of the algorithm follows: 
 
\begin{theorem}[\cite{bonnet2017complexity}]
The above algorithm uses $(n-1-\lambda) + \kappa$ swaps and this is the minimum possible.
\end{theorem}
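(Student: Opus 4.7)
The plan is to establish matching upper and lower bounds. The upper bound is immediate: the discussion preceding the theorem already shows that the token-vertex assignment produced in Step~1 of the algorithm has exactly $\lambda$ happy leaves and $\kappa$ non-trivial locked cycles, so applying Lemma~\ref{lemma:stars} to the resulting standard token swapping problem in Step~2 yields exactly $(n-1-\lambda)+\kappa$ swaps. So the real content is the lower bound.

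For the lower bound, the key observation is that any swap sequence $\sigma$ that solves the coloured instance induces a token-vertex assignment $T_\sigma$: namely, $T_\sigma$ sends each token to whatever vertex it occupies at the end of $\sigma$. Since $\sigma$ solves the coloured problem, $T_\sigma$ respects colours and is therefore a valid token-vertex assignment. Viewing $T_\sigma$ as the target of a standard (uncoloured) token swapping problem, $\sigma$ is a swap sequence realising that target, so by Lemma~\ref{lemma:stars} it has length at least $n_U(T_\sigma)+\ell(T_\sigma)$, where $n_U(T_\sigma)$ and $\ell(T_\sigma)$ denote the number of unhappy leaves and the number of non-trivial locked cycles of $T_\sigma$.

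It therefore suffices to lower-bound $n_U(T)+\ell(T)$ by $(n-1-\lambda)+\kappa$ for every valid token-vertex assignment $T$, and this is precisely what the first two parts of Lemma~\ref{lem:token-assign-properties} give. Part~1 says $T$ has at most $\lambda$ happy leaves, so $n_U(T)\ge n-1-\lambda$. Part~2 says $T$ has at least $\kappa$ non-trivial locked cycles, so $\ell(T)\ge\kappa$. Summing the two inequalities yields the desired bound, which combined with the upper bound gives optimality.

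There is essentially no hard step left once Lemma~\ref{lem:token-assign-properties} is in hand; the only thing that needs a little care is to make explicit that the two lower bounds in that lemma apply simultaneously to the same $T$, so that they can be added. This is immediate because each part is an assertion about an arbitrary fixed $T$, and does not rely on the other being tight.
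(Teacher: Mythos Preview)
Your proof is correct and follows essentially the same approach as the paper: both establish the upper bound from the algorithm's token-vertex assignment via Lemma~\ref{lemma:stars}, and the lower bound by applying Lemma~\ref{lem:token-assign-properties} (parts~1 and~2) together with Lemma~\ref{lemma:stars} to an arbitrary token-vertex assignment. Your version is slightly more explicit in spelling out that any swap sequence induces a token-vertex assignment and that the two bounds from the lemma apply simultaneously, but the argument is the same.
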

\begin{proof}
As already stated, the algorithm uses $(n-1-\lambda) + \kappa$ swaps.

By Lemma~\ref{lem:token-assign-properties} any other token-vertex assignment results in at most 
$\lambda$ happy leaves, i.e.~at least $n-1-\lambda$ unhappy leaves, and at least $\kappa$ non-trivial locked cycles, and therefore, by Lemma~\ref{lemma:stars}, at least $(n-1-\lambda) + \kappa$ swaps.  
\end{proof}

\subsubsection{Weighted coloured token swapping on a star}
\label{sec:weights-colours-star}

Our algorithm for weighted coloured token swapping on a star is as follows:

\begin{enumerate}
\item Ignore the weights and find a token-vertex assignment as in Step 1 of the algorithm in the previous section.
\item Using this token-vertex assignment $T$ and the original token weights,
run the algorithm for the (uncoloured) weighted star.
\end{enumerate}

 In order to show that this algorithm is correct, we will first show that any optimum token-vertex assignment must turn all leaf-loops into happy leaves.  
After that 
we only need to compare the solution found by the algorithm to solutions with this property. 

\begin{claim}
\label{claim:coloured-weighted-star-claim} 
Suppose $T$ is a token-vertex assignment and there is a leaf-loop consisting of a leaf vertex $v$ with token $t$ such that both $v$ and $t$ have colour $c$, but the token-vertex assignment does not assign $t$ to $v$.  Then $T$ is not optimum for the weighted problem.
\end{claim}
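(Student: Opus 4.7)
The plan is to exhibit an alternative assignment $T'$ with $\mathrm{OPT}_{T'} < \mathrm{OPT}_T$. Let $v' := f_T^{-1}(t)$ and $t'' := f_T(v)$, so both $v$ and $v'$ have colour $c$, and let $C$ be the cycle of $T$'s permutation that contains $v$. My first candidate for $T'$ is the \emph{simple swap}: reassign $t \to v$ and $t'' \to v'$, leaving all other assignments of $T$ unchanged. This makes $v$ happy and either shortens $C$ by one vertex or splits $C$ into two trivial cycles when $|C|=2$.

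First I use Claim~\ref{claim:alternate-D}: only the $D_w$-contributions of $v$ and $v'$ change. A short case analysis on whether $v'$ is the center or a leaf, and on whether $u := s^{-1}(t'')$ equals $v'$, gives $D_w(T) - D_w(T') \ge 2w(t) > 0$. Next I track the remaining quantities in the formula of Theorem~\ref{thm:min-weight-star}: passing from $T$ to $T'$ we have $\ell_{T'} \le \ell_T$, $w(x_{T'}) \ge w(x_T)$, $w(a_{T'}) \ge w(a_T)$ (since $t$ leaves the active set), and $w(h_{T'}) \le \min\{w(h_T), w(t)\}$ (since $t$ enters the happy set). In every structural configuration of $C$ except one delicate subcase, substituting into the formula already yields $\mathrm{OPT}_{T'} < \mathrm{OPT}_T$.

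The main obstacle is the subcase where $t$ is the unique lightest active token in $T$ (so $w(a_T) = w(t) < w(h_T)$), $C$ is a locked cycle of length at least $3$, and the second-lightest active token has weight at least $w(t)\cdot(\ell_T+1)/(\ell_T-1)$. Here the simple swap can actually make $\mathrm{OPT}$ grow, because pushing $t$ into the happy set forces the $\min$ term in the formula to switch from the $a$-option to the $h$-option and increase by up to $4w(t)$, overwhelming the $2w(t)$ saving in $D_w$. To handle this, I replace the simple swap by a \emph{multi-swap}: simultaneously make every leaf-loop vertex of $C$ happy, and reconnect the remaining non-leaf-loop vertices of $C$ into sub-cycles along the arcs between consecutive leaf-loop positions. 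The colour constraints along $C$ (each leaf-loop of $C$ has colour $c$, hence the token at the vertex immediately preceding a leaf-loop in $C$ also has colour $c$) ensure that these arc-sub-cycles form a valid token-vertex assignment; and in this problematic subcase the multi-swap reduces $\ell$ by at least one, which shrinks the $\min$ term enough to give $\mathrm{OPT}_{T^{\ddagger}} < \mathrm{OPT}_T$. The hard part will be the bookkeeping for the multi-swap, in particular handling arcs of length one (where the naive arc-cycle construction fails because the single non-leaf-loop vertex cannot form a valid happy trivial cycle) by collapsing adjacent arcs through selectively leaving some leaf-loops unfixed.
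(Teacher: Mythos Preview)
Your construction of $T'$ (the ``simple swap'') is exactly the paper's shortcut of the cycle $C$. The gap is in the analysis: you overlook that the token $t''$ you reroute has the \emph{same colour} $c$ as $t$ (since $T$ assigned $t''$ to $v$, which has colour $c$), and therefore the \emph{same weight}. You even note this colour fact later when justifying validity of the multi-swap, but you fail to exploit it for the simple swap.

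This ``twin'' observation is the whole point. In every case except $|C|=2$ with both vertices leaves, the token $t''$ remains in a non-trivial cycle of $T'$, hence stays in the active set $A$ (and in $X$, if $C$ was the unlocked cycle). Consequently $w(a_{T'}) = w(a_T)$ and $w(x_{T'}) = w(x_T)$ exactly, not merely $\ge$ as you write. With those equalities, the $2w(t)$ drop in $D_w$ immediately gives $F(T') < F(T)$. Your ``delicate subcase'' --- where $t$ is the \emph{unique} lightest active token and $|C|\ge 3$ --- is therefore vacuous: $t''$ is also active and has weight $w(t)$, so $t$ is never uniquely lightest. The entire multi-swap apparatus, with its arc-sub-cycles and the admittedly unresolved bookkeeping for length-one arcs, is addressing a situation that cannot occur.

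The only genuinely special case is $|C|=2$ with $C$ exchanging two leaf tokens: then both $t$ and $t''$ become happy, $\ell$ drops by one, and $w(a)$ may rise. The paper dispatches this directly by observing that $D_w$ now drops by $4w(t)$ and that $t$ has joined $H$, so Strategy~3 applied to $T'$ costs at most $(D_w - 4w(t)) + 2w(x) + 2w(t)\ell < F(T)$. You should replace the multi-swap plan with the twin-weight argument and this two-line special case.
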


\begin{proof}
By Theorem~\ref{thm:min-weight-star}, the cost of $T$ is 
$$F(T) = D_w + 2w(x) + 2\min\{w(a)(\ell -1), w(h) (\ell + 1)\},$$ 
where $D_w, w(x), w(a), w(h)$, and $\ell$ depend on $T$.  We will construct a new token-vertex assignment $T'$ that assigns $t$ to $v$ and has $F(T') < F(T)$.

Since $t$ is not assigned to $v$, $t$ must be part of some non-trivial cycle $C$ in the token permutation determined by $T$. 
Suppose that the cycle $C$ contains tokens $p, t, q$ in that order (possibly $p = q$), with initial vertices $s(p), s(t){=}v,s(q)$, respectively.   
Define a new token-vertex assignment $T'$ that 
assigns $t$ to $v$, i.e., $v$ becomes a happy leaf, and shortcuts the rest of $C$ by assigning token $p$ to vertex $s(q)$.  This is valid because token $p$ and vertex $s(q)$ both have colour $c$, the same as $t$.  The new cycle $C'$ is formed by deleting $t$ from $C$.
We will compare $F(T)$ and $F(T')$ by looking at the quantities $D_w, w(x), w(a), w(h)$ and $\ell$.

First of all, no leaf becomes unhappy, so no token leaves $H$ and  
$w(h)$ does not increase.  Furthermore, $v$ becomes happy, so 
by Claim~\ref{claim:alternate-D}, $D_w$ decreases by at least $2w(t)$.

Next we show that $w(x)$ does not increase.  That would only happen if $t$ leaves the set $X$.  Then $C$ must be the unlocked cycle.  Since $t$ is at a leaf vertex, the token from the center vertex remains in $C'$, so $C'$ is the new unlocked cycle.  Furthermore, 
token $p$, which is a `twin' of $t$ in the sense that it has the same colour and weight, remains in $C'$, so $w(x)$ remains the same.

Finally we must consider $\ell$ and $w(a)$.  Here we will separate out one special case---when $|C| = 2$ and $C$ exchanges two leaf tokens, in which case $C'$ becomes a trivial locked cycle.  If we are not in the special case then either $C'$ is a non-trivial locked cycle, or $C'$ is the unlocked cycle. In  either case $C$ has the same status, so $\ell$ is unchanged and $t$'s twin $p$ remains in the active set $A$ so $w(a)$ does not increase. Thus $F(T') < F(T)$ when we are not in the special case.

It remains to consider the special case when $C$ exchanges two leaf tokens.
Then $C$ was a non-trivial locked cycle, but $C'$ is a trivial locked cycle.  Thus $\ell$ decreases by 1.  Furthermore, by Claim~\ref{claim:alternate-D}, $D_w$ decreases by at least $4w(t)$ since two leaves become happy.
If $w(a)$ does not increase then we are fine. If it does increase then $w(a) = w(t)$ and $t$ was the minimum weight element in $A$.
Because we are in the special case, both $t$ and its twin token $p$ have left $A$ and joined $H$.
If $F(T)$ is determined by $w(h)(\ell+1)$ we are again fine. Hence we only need to provide an additional argument if $F(T) = D_w + 2w(x) + 2w(a)(\ell -1)$. Since we now have a token of weight $w(a)=w(t)$ in $H$, Strategy 3 gives a swap sequence for $T'$ of weight at most $(D_w - 4w(t)) + 2w(x) + 2w(t)\ell = D_w - 2w(t) + 2w(x) + 2w(t)(\ell -1)$.  Thus $F(T') < F(T)$ even in the special case.
\end{proof}

With this claim in hand, we are ready to prove that the algorithm is correct.

\begin{theorem}
The above algorithm solves the weighted coloured token swapping problem
on a star optimally.
\end{theorem}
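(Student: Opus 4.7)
The plan is to compare the cost of the algorithm's assignment $T_A$, produced by Step~1, with the cost of an arbitrary optimal token-vertex assignment $T^*$, using the explicit formula of Theorem~\ref{thm:min-weight-star}. Since Step~2 realizes $F(T_A)$ exactly, it will suffice to show $F(T_A) \le F(T^*)$.

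First, by Claim~\ref{claim:coloured-weighted-star-claim} I may assume that $T^*$ turns every leaf-loop of $G$ into a happy leaf. A leaf of the star can become happy only if its initial token already has the matching colour, so happy leaves correspond exactly to leaf-loops. Hence $T^*$ and $T_A$ have the same $\lambda$ happy leaves, the same sets $H$ and $A$ of happy and active tokens, and in particular the same values of $w(h)$ and $w(a)$ in the formula.

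Next I would invoke parts~2 and~3 of Lemma~\ref{lem:token-assign-properties}: part~2 gives $\ell^* \ge \kappa$, and part~3---applicable because $T^*$ attains the maximum of $\lambda$ happy leaves---says that the tokens in the unlocked cycle of $T^*$ are a subset of $X_A$. Since $T_A$'s unlocked cycle contains exactly the tokens $X_A$, taking the minimum weight over a subset gives $w(x^*) \ge w(x_A)$.

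The remaining quantity is $D_w$, and this is where the key observation lies. By Claim~\ref{claim:alternate-D}, $D_w = \sum \{w(s^{-1}(v)) + w(f^{-1}(v)) : v \text{ is an unhappy leaf}\}$. Because every token of a given colour shares the same weight, the term $w(f^{-1}(v))$ depends only on the colour of $v$, not on which particular token of that colour is assigned there; the initial term $w(s^{-1}(v))$ is independent of the assignment entirely. Thus $D_w$ is determined solely by the set of unhappy leaves, which coincides for $T^*$ and $T_A$. Combining the above, the formula $F(T) = D_w + 2w(x) + 2\min\{w(a)(\ell-1), w(h)(\ell+1)\}$ is non-decreasing in $w(x)$ and in $\ell$ (both $w(a)(\ell-1)$ and $w(h)(\ell+1)$ are non-decreasing in $\ell$, and so is their minimum), so $F(T^*) \ge F(T_A)$. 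The main subtle point---and the one I would double-check most carefully---is the $D_w$ invariance, which relies essentially on same-colour tokens having the same weight and would fail for more general weight assignments.
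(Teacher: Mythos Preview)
Your proof is correct and follows essentially the same approach as the paper: both invoke Claim~\ref{claim:coloured-weighted-star-claim} to fix the happy leaves, then compare the parameters $D_w$, $w(a)$, $w(h)$, $w(x)$, $\ell$ in the formula of Theorem~\ref{thm:min-weight-star} using Claim~\ref{claim:alternate-D} and Lemma~\ref{lem:token-assign-properties}. You are in fact a bit more explicit than the paper in two places---spelling out why $D_w$ is invariant (the paper just cites Claim~\ref{claim:alternate-D} without noting that the same-weight-per-colour assumption is what makes $w(f^{-1}(v))$ independent of the assignment) and stating the monotonicity of $F$ in $\ell$ and $w(x)$ rather than asserting that the values coincide.
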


\begin{proof} 
By Theorem~\ref{thm:min-weight-star}, the cost of a token-vertex assignment $T$ is 
$$F(T) = D_w + 2w(x) + 2\min\{w(a)(\ell -1), w(h) (\ell + 1)\},$$ 
where $D_w, w(x), w(a), w(h)$, and $\ell$ depend on $T$.  

We will compare the cost of a token-vertex assignment $T_A$ found by the algorithm to an optimum token-vertex assignment $T_{\rm OPT}$.  
By Claim~\ref{claim:coloured-weighted-star-claim}, $T_{\rm OPT}$ turns all leaf-loops into happy leaves, so it has $\lambda$ happy leaves.  The algorithm does the same, so $T_A$ and $T_{\rm OPT}$ have the same set $H$ of tokens on happy leaves, and the same set $U$ of tokens on unhappy leaves.  This implies that $w(a)$ and $w(h)$ are the same for $T_A$ and $T_{\rm OPT}$.  

Next, we claim that $D_w$ is the same for $T_A$ and $T_{\rm OPT}$. This follows directly from Claim~\ref{claim:alternate-D} since the set of unhappy leaves is the same. 

It remains to compare $\ell$ (the number of non-trivial locked cycles) and $w(x)$ between $T_A$ and $T_{\rm OPT}$.  
Both values should be as small as possible in $T_{\rm OPT}$.  
The algorithm achieves $\ell = \kappa$ and $w(x) = \min\{w(t) : t \in X_A\}$, where $X_A$ is the set of tokens in the unlocked cycle of $T_A$. 
By Lemma~\ref{lem:token-assign-properties}(2) $T_{\rm OPT}$ has at least $\kappa$ non-trivial locked cycles.  
By Lemma~\ref{lem:token-assign-properties}(3), $T_{\rm OPT}$'s set of tokens in the unlocked cycle is a subset of $X_A$ (here we again use the fact that $T_{\rm OPT}$ has $\lambda$ happy leaves). Thus $T_A$ and  $T_{\rm OPT}$ achieve the same values for $\ell$ and $w(x)$.
This completes the proof that the algorithm achieves the minimum value of $F(T)$. 
\end{proof}

\section{Token Swapping on Brooms}
\label{sec:brooms}

In this section we give a polynomial-time algorithm for token swapping on a broom. 

A \emph{broom} is a tree that consists of a star joined to the endpoint of a path.
Suppose that the broom's star has $k$ leaves, $v_1, \ldots, v_k$ and one center vertex, $v_{k+1}$, and its path has $n-k$ vertices, $v_{k+1}, v_{k+2}, \ldots, v_{n}$.
See Figure~\ref{fig:broom}.
We will call vertices $v_1, \ldots, v_k$ the \emph{star leaves} and call vertices $v_{k+1}, v_{k+2}, \ldots, v_{n}$ the \emph{path vertices}.
We will call $v_{k+1}$ the \emph{center vertex}, though we also consider it to be a path vertex.
For purposes of identification, we will orient the edges of a broom from lower index vertices to higher index vertices, and we will draw brooms with edges directed to the right as in the figure.  

\begin{figure}[ht]
    \centering
    \includegraphics[width=\textwidth]{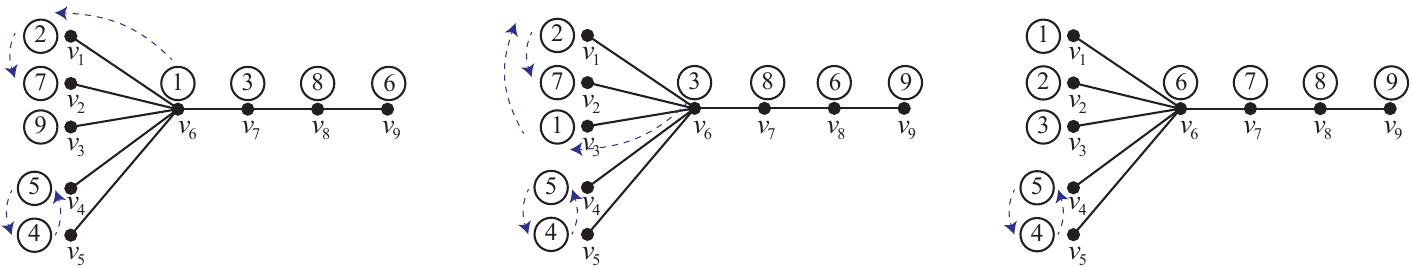}
    \caption{A broom with leaves $v_1, v_2, v_3, v_4, v_5$, path $v_6, v_7, v_8, v_9$ and center vertex $v_6$.  Left: The initial placement of tokens (drawn in circles).
    Destinations within the star are indicated by dashed lines. 
    Tokens 4 and 5 form a cycle on the star leaves and will be exchanged by Step 2 of the algorithm.  The algorithm begins with token 9.  Step 1(b) does not apply, so token 9 is moved to its home.  Middle: The token placement after token 9 is homed.  Token 8 will be homed next.  For token 7, Step 1(b) applies, since the chain starting from token 3 on the center vertex leads to token 7. Right: The token placement after tokens 8 and 7 are homed.   
    }
    \label{fig:broom}
\end{figure}

The input to the problem is a broom and an initial assignment of tokens $1, 2, \ldots, n$ to the vertices.  
The problem is to find a minimum length sequence of swaps to ``sort'' the labels, i.e., to get label $i$ at vertex $v_i$ for all $i$.

In 1999 Vaughan~\cite{vaughan1999broom} gave a polynomial-time algorithm for token swapping on a broom. 
Her algorithm is a bit more complicated than ours and her proof of correctness is long (14 pages).
 Another polyomial-time algorithm for token swapping on a broom was given by Kawahara et al.~\cite{kawahara2017time} (with proof in~\cite{kawahara2016time}).  They use the term ``star-path'' instead of ``broom'' and were unaware of Vaughan's work.  Their algorithm is simple, but their proof is again long (11 pages).  Our correctness proof is shorter.
 Further comparisons can be found below.
 It is interesting that three independently developed algorithms for the problem are similar in nature, and that all three correctness proofs are non-trivial.  None of the three manage to give a formula for the number of swaps that is independent of the operation of algorithm (such as the formulas for paths and stars).

We will call the tokens $1, \ldots, k$---the tokens that want to end up at the star leaves---the \emph{star tokens} and we will call the other tokens the \emph{path tokens}.   A \emph{centered star chain} of token $t_{m+1}$ is a sequence of tokens $t_1, \ldots, t_{m+1}$ such that 
\begin{itemize}
    \item $t_1$ is currently at the center vertex,
    \item $t_2, \ldots, t_{m+1}$ are currently at star vertices,
    \item for $1 \le i \le m$, $t_i$'s home currently contains $t_{i+1}$, i.e., the sequence forms a chain in the token permutation.
\end{itemize} 
For example, in Figure~\ref{fig:broom}(Middle) token 7 has a centered star chain $3,1,2,7$.

\subsection{Algorithm for token swapping on a broom}
Our algorithm $\cal A$ is as follows. 
\begin{enumerate}
\item While there is a path token that is not home:
\begin{enumerate}
\item Let $p_{\max}$ be the maximum path token that is not home.
\item If $p_{\max}$ is on a star leaf and has a centered star chain  
$t_1, \ldots, t_m, p_{\max}$ then perform the $m$ swaps that move $t_i$ home for $i = 1, \ldots, m$.
The final swap moves $p_{\max}$ to the center vertex.   
\item 
Home $p_{\max}$. 
\end{enumerate}
\item (At this point all path tokens are home.) Solve the star.
\end{enumerate}

Step 2 of the algorithm just involves solving the star at the end, and is well-understood from previous work. 
Note that the algorithm does not move happy leaves---thus, correctness of the algorithm implies that the Happy Leaf Conjecture is true for brooms.

We begin our analysis of the algorithm by noting 
that Step 1 has the following nice properties:
\begin{itemize}
    \item[(P1)] Every path token moves left for a time (as larger tokens move past it) and then moves right to its home.  
    \item[(P2)] Every swap performed by the algorithm either moves the largest unhomed path token to the right, or homes a star token.  When we are lucky, a single swap does both (the last swap performed in Step 1(b)).   
\end{itemize}

We prove correctness of the algorithm in Section~\ref{sec:broom-correct} below.  Before that, we analyze the number of swaps used by the algorithm (this is not needed for the correctness proof) and we compare our algorithm to the others.

\subsection{Algorithm analysis}
We first consider Step 2.  Let $\ell_S$ be the number of non-trivial cycles in the initial permutation that only involve star tokens on star leaves, and let $n_S$ be the number of tokens that are in these cycles.  These tokens are not touched by Step 1, and we claim that, conversely, after Step 1, these are the only unhomed tokens. 
The reason is that no swap performed by the algorithm can create a cycle involving star tokens on star leaves---this is precisely what Step 1(b) prevents. 
Thus, by Lemma~\ref{lemma:stars} the number of swaps performed by Step 2 of the algorithm is $n_S + \ell_S$.

It remains to analyze the number of swaps performed in Step 1.  To ease notation, we now assume that $n_S$ is 0.
Let $W$ be the number of swaps performed by Step 1 of the algorithm.
Write $W = W_P + W_S$ where $W_P$ is the number of swaps that involve a path token and $W_S$ is the number of swaps that involve two star tokens. 
Then $W_S = S_U - L$, where $S_U$ is the number of unhomed star tokens 
and $L$ is the number of `lucky' swaps that move $p_{\max}$ to the right AND home a star token.  Equivalently, $L$ is the number of times Step 1(b) of the algorithm is performed.

To analyze $W_P$, we will allocate each of its swaps to the maximum path token involved in the swap.
Then $W_P = \sum_p W(p)$ where $W(p)$ is the number of swaps allocated to path token $p$. 
For any path token $p$, 
define $d(p)$ to be the distance (the number of edges) from a star leaf to $p$'s home.  Observe that it does not matter which star leaf we use in this definition.
Relative to the initial token placement, let $r(p)$ be the number of tokens smaller than $p$ that occur to the right of $p$.  

\begin{claim}
    $W(p)$ is the minimum of $d(p)$ and $r(p)$.
\end{claim}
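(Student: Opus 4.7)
The plan is to compute $W(p)$ from $p$'s position $q$ just before $p$'s iteration begins, and then relate that quantity to $d(p)$ and $r(p)$ via a simple invariant counting smaller tokens to the right of $p$.

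First I would observe, using property (P1), that $W(p)$ equals the tree distance from $q$ to $v_p$: if $q = v_j$ is on the path then $W(p)=p-j$, and if $q$ is a star leaf then $W(p)=d(p)=p-k$ (when Step 1(b) is invoked, its final swap is allocated to $p$ and covers exactly one edge of this distance, while Step 1(c) covers the remaining $d(p)-1$ edges). I would also note the easy fact that $p$ can be shoved to a star leaf only from the center, and once on a star leaf it stays put until its own iteration.

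Next I would introduce the invariant: let $\mathrm{inv}(p)$ denote the number of tokens smaller than $p$ whose positions have larger path index than $p$'s. Initially $\mathrm{inv}(p)=r(p)$, and I claim $\mathrm{inv}(p)$ remains unchanged by any swap of the algorithm as long as $p$ occupies a path vertex. A swap that involves $p$ on the path is a shove that exchanges $p$ with a larger token at an adjacent path vertex, and a direct comparison of the two configurations shows the count is preserved. A swap that does not involve $p$ happens at an edge whose two endpoints both lie strictly on one side of $p$ in the path-index ordering: for a path edge the two endpoints' indices differ by one so they cannot straddle $p$'s index; for a star-leaf/center edge the two indices are at most $k+1$, and since the non-$p$ assumption forces $p$'s index to strictly exceed $k+1$, both endpoints again lie to the left of $p$. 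In either case no smaller token crosses $p$.

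With the invariant I case-split. If $q=v_j$ lies on the path then $p$ stayed on the path throughout, so at the start of $p$'s iteration $r(p)=\mathrm{inv}(p)$; at that moment the larger tokens are all home at $v_{p+1},\dots,v_n$, so $v_{j+1},\dots,v_p$ are filled by smaller tokens, giving $r(p)=p-j=W(p)$, and since $p-j\le p-(k+1)<d(p)$ we get $\min\{d(p),r(p)\}=r(p)=W(p)$. If $q$ is a star leaf then $W(p)=d(p)$ and I must show $r(p)\ge d(p)$. When $p$ was initially at a star leaf $v_{i_0}$ with $i_0\le k$, at most $n-p$ of the $n-i_0$ positions with path index $>i_0$ carry larger tokens, so $r(p)\ge p-i_0\ge p-k$. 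When $p$ was initially on the path and was eventually shoved from the center to a star leaf, I apply the invariant at the moment just before that final shove: $p$ sits at $v_{k+1}$ and some larger token $p^*>p$ sits at the star leaf $v_i$ about to shove $p$. The invariant gives $r(p)=(n-k-1)-L$ where $L$ counts larger tokens currently at $v_{k+2},\dots,v_n$, and since $p^*$ sits at $v_i$ outside that range we have $L\le n-p-1$, hence $r(p)\ge p-k=d(p)$.

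The main obstacle will be verifying the invariant carefully for non-$p$ swaps at star-leaf/center edges, whose endpoints have indices that can differ by up to $k$: a smaller token might seem to cross $p$ in such a swap, but the resolution is that $p$'s being on the path forces $p$'s index to strictly exceed $k+1$ (otherwise $p$ would be a participant of the swap), so both endpoints lie on the left of $p$ and the count is undisturbed.
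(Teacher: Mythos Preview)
Your proof is correct and follows essentially the same approach as the paper: both arguments look at the configuration $C$ just before $p$'s own iteration, compute $W(p)$ as the distance from $p$'s position there to $v_p$, and case-split on whether that position is a path vertex or a star leaf. Your explicit invariant $\mathrm{inv}(p)$ is a careful formalization of the paper's informal observation that ``no swap has occurred between $p$ and a smaller token, so $r_C(p)\le r(p)$,'' and your separate handling of the sub-case where $p$ starts on the path but gets pushed to a star leaf spells out something the paper leaves implicit.
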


\begin{proof}
By Property (P1), the movement of every path token $p$ during the algorithm consists of some number of steps to the left followed by some number of steps to the right until $p$ reaches its home.  The leftward steps are caused by larger tokens moving over $p$ and are allocated to the larger tokens.  $W(p)$ counts the rightward steps of token $p$.

Let $C$ be the configuration of tokens after the algorithm has homed all tokens larger than $p$.  In $C$, the tokens larger than $p$ take up all the vertices to the right of $p$'s home, so $p$ will be to the left of its home.  Let $r_C(p)$ be the number of tokens smaller than $p$ that occur to the right of $p$ in configuration $C$. Then $W(p)= r_C(p)$.  No swap has occurred between $p$ and a smaller token, so $r_C(p) \le r(p)$. Furthermore, $r_C(p)=r(p)$ unless $p$ is at a star leaf (in which case some of the original $r(p)$ vertices may have moved to other star leaves). 
We now consider two possibilities:  
\begin{itemize}
\item If $p$ is at a star leaf in configuration $C$ then its distance from its home is $d(p)$ so $W(p) = d(p) = r_C(p)$,  which implies that $d(p) \le r(p)$.  Thus $W(p) = \min\{d(p),r(p)\}$. 
\item If $p$ is not at a star leaf in configuration $C$ then $r(p) = r_C(p) <d(p)$.  Then $W(p)= r(p)$, and again $W(p) = \min\{d(p),r(p)\}$.
\end{itemize}

In either case, $W(p) = \min\{d(p),r(p)\}$.
\end{proof}

In summary, the number of swaps, $W$, performed by Step 1 of the algorithm can be expressed in terms of the initial token placement except for the 
`lucky' term $L$, which is found by running the algorithm:

\begin{lemma}
$W = \sum \{\min\{d(p),r(p)\} : p \text{ a path token} \} + S_U -L $.
\end{lemma}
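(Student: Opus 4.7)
The plan is to assemble the lemma from the two pieces already prepared just before the statement: the decomposition $W = W_P + W_S$, the expression $W_P = \sum_p W(p)$, the value $W(p) = \min\{d(p),r(p)\}$ furnished by the preceding Claim, and the identity $W_S = S_U - L$ that was asserted. Only the last of these requires any real argument; the rest is bookkeeping.

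For $W_P$, I would simply note that Step 1 allocates each swap that involves a path token to the maximum path token participating in it, so $W_P = \sum_p W(p)$, and then substitute $W(p)=\min\{d(p),r(p)\}$ from the Claim (which is already proved). This yields the first summand of the lemma.

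For $W_S$, I would argue as follows. Under the assumption $n_S=0$, every unhomed star token must be moved to its home during Step 1, and indeed inspection of the algorithm shows that Step 1(c) only performs swaps involving the path token $p_{\max}$, while the $m$ swaps of a single execution of Step 1(b) consist of exactly $m-1$ star-star swaps (which home $t_1,\dots,t_{m-1}$) followed by one star-path swap (which simultaneously homes $t_m$ and moves $p_{\max}$ to the centre). Thus every star-star swap of the algorithm homes exactly one star token, and conversely every star token is homed either by some star-star swap or by the final swap of some execution of Step 1(b). Counting star tokens by how they are homed gives $S_U = W_S + L$, so $W_S = S_U - L$.

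Combining $W = W_P + W_S$ with the two evaluations yields
\[
W \;=\; \sum_{p\text{ path token}}\min\{d(p),r(p)\} \;+\; S_U - L,
\]
which is the claimed identity. The only step I expect to require care is the verification that each execution of Step 1(b) contains exactly one star-path swap (its last swap) and that no other swap of Step 1 is a star-star swap; once that is confirmed, the rest of the argument is immediate.
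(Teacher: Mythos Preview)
Your proposal is correct and follows exactly the route the paper takes: the lemma is stated there as a summary of the immediately preceding discussion, combining $W=W_P+W_S$, the Claim giving $W(p)=\min\{d(p),r(p)\}$, and the asserted identity $W_S=S_U-L$. You supply more justification for $W_S=S_U-L$ than the paper does (the paper only states it), and your counting argument---that each execution of Step~1(b) contributes $m-1$ star--star swaps while homing $m$ star tokens, and that Step~1(c) contributes no star--star swaps---is sound under the standing assumption $n_S=0$.
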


\paragraph*{Vaughan's algorithm.}  The algorithm by Vaughan~\cite{vaughan1999broom} is similar to our algorithm.
We introduce one new concept, generalizing a centered star chain.  A \emph{star chain}
of token $t_{m+1}$ is a sequence of tokens $t_1, \ldots, t_{m+1}$ such that 
\begin{itemize}
    \item $t_1$ is not at a star vertex,
    \item $t_2, \ldots, t_{m+1}$ are currently at star vertices,
    \item for $1 \le i \le m$, $t_i$'s home currently contains $t_{i+1}$, i.e., the sequence forms a chain in the token permutation.
\end{itemize}

Using this terminology, Vaughan's algorithm is as follows.
\begin{enumerate}
    \item While there is a path token on a star leaf:
    \begin{enumerate}
        \item Let $p_{\max}$ be the maximum path token on a star leaf.
        \item Let $t$ be the leftmost token on a path vertex that is smaller than $p_{\max}$.
        \item Swap $t$ leftward along the path to the center vertex.  Let $t_1, \ldots, t_{m}, p_{\max}$ be the star chain of $p_{\max}$.   
        Swap $t$ to the star vertex that is the home of $t_1$ and continue resolving the star chain (next moving $t_2$) until $p_{\max}$ is swapped to the center vertex.  Note that token $t$ will be home if and only if the star chain was centered. 
    \end{enumerate}
    \item Solve the star.  Solve the path.
\end{enumerate}

It can be shown that the two algorithms are equivalent.  
Vaughan proved that her algorithm is correct by transforming any optimal swap sequence to one that matches her algorithm in its first phase.  
Her proof is difficult because properties (P1) and (P2) do not hold.

\paragraph*{Alternate algorithm.}  The algorithm by Kawahara et al.~\cite{kawahara2016time} is very simple:  

\begin{enumerate}
\item While there is a token that is not home:
\begin{enumerate}
\item Let $t_{\max}$ be the maximum token that is not home.
\item  While the center vertex contains a star token, home it.
\item Home $t_{\max}$. 
\end{enumerate}
\end{enumerate}

Their algorithm satisfies (P2) but not (P1) (in particular, a path token may bounce around inside the star) which makes their proof difficult.

\subsection{Correctness}
\label{sec:broom-correct}
In order to prove that algorithm $\cal A$ finds an optimal swap sequence, i.e., a swap sequence with a minimum number of swaps,
we will prove that for any input to the token swapping problem on a broom  
there is an optimal swap sequence with some nice properties, and then apply induction on $n$.
We begin with an easy observation about optimal swap sequences.  

\begin{claim}
\label{claim:no-duplicate-swap}
An optimal swap sequence does not contain two swaps that swap the same two tokens.
\end{claim}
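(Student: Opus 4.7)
The plan is a standard exchange argument: assume for contradiction an optimal swap sequence $S = \sigma_1, \ldots, \sigma_k$ contains two swaps $\sigma_i$ and $\sigma_j$ (with $i < j$) that both exchange the same pair of tokens $a$ and $b$. I will construct a sequence $S'$ of length $k-2$ that reaches the same final configuration, contradicting optimality.

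Define $S'$ by deleting $\sigma_i$ and $\sigma_j$ while keeping every other swap \emph{on the same edge} as in $S$. Concretely, $S'$ performs $\sigma_1, \ldots, \sigma_{i-1}$, then skips $\sigma_i$, then performs the edge-swaps $\sigma_{i+1}, \ldots, \sigma_{j-1}$, then skips $\sigma_j$, then performs $\sigma_{j+1}, \ldots, \sigma_k$. The main step is to prove the following invariant by induction on $m \in \{i, i+1, \ldots, j\}$: after the first $m$ edge-swaps of $S$ have been executed in $S$, and the corresponding swaps of $S'$ (which skips step $i$) have been executed in $S'$, the two configurations differ exactly by transposing tokens $a$ and $b$.

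The base case $m = i$ is immediate: $\sigma_i$ swaps $a$ and $b$ across some edge $(u,v)$ where $a$ is on $u$ and $b$ is on $v$, and since $S'$ skips this swap, the two configurations differ precisely by swapping $a$ and $b$. For the inductive step, let $\sigma_{m+1}$ be a swap on an edge $e = (p,q)$. There are three cases. If neither endpoint of $e$ holds $a$ or $b$ in the $S$-configuration, then it also does not in the $S'$-configuration, and the two swaps have identical effects on the other tokens. If exactly one endpoint holds $a$ (in $S$) or $b$ (in $S$), that endpoint instead holds $b$ or $a$ respectively in $S'$; a short case analysis shows that after the swap the two configurations still differ only by the transposition of $a$ and $b$. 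If both endpoints hold $\{a,b\}$ (in either order), the swap permutes $a$ and $b$ between $p$ and $q$ in both configurations, and the invariant is preserved. At step $m = j$, before $\sigma_j$, configuration $S$ has $a$ and $b$ on the two endpoints $u', v'$ of the edge of $\sigma_j$; by the invariant, configuration $S'$ has them on the same endpoints but swapped. Hence performing $\sigma_j$ in $S$ and skipping it in $S'$ brings both to the same configuration. From step $j+1$ onward $S$ and $S'$ perform the same swaps on equal configurations, so they reach identical final configurations. Thus $S'$ is a valid swap sequence of length $k-2$ producing the same result, contradicting optimality of $S$.

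The only mildly delicate part is the case analysis in the inductive step — in particular making sure that the skipped swap at step $j$ really does recombine the two runs. The argument is general and uses nothing specific to brooms, so the claim holds on any graph.
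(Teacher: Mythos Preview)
Your proof is correct and takes essentially the same approach as the paper: delete the two offending swaps and argue that the intermediate portion still works, yielding a shorter sequence. The paper phrases this more tersely as ``for each $k$, $i<k<j$, exchanging the roles of $a$ and $b$,'' which is exactly the conceptual content of your invariant (the two runs differ only by the transposition of $a$ and $b$); your version spells out the case analysis the paper leaves implicit.
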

\begin{proof}
We prove the contrapositive.  Suppose the swap sequence $\sigma = \sigma_1 \sigma_2 \cdots \sigma_m$ sorts the tokens but contains two swaps $\sigma_i$ and $\sigma_j$, $i<j$, that swap the same two tokens $a$ and $b$.  Modify $\sigma$ by deleting $\sigma_i$ and $\sigma_j$ and, for each $k$, $i < k < j$, exchanging the roles of $a$ and $b$.  The result is a shorter swap sequence that sorts the tokens.   
\end{proof}

Next we show that an optimal swap sequence that minimizes the number of swaps on path edges has some nice properties. 
For convenience, we will say that a swap sequence is \emph{P-optimal} if it is an optimal swap sequence and minimizes the number of swaps on path edges.  

\begin{lemma}
\label{lemma:minimal-properties}
A P-optimal swap sequence $\sigma$ has the following properties:
\begin{enumerate}
    \item \label{property:no-star-swap} No swap of two star tokens occurs on a path edge.
    \item \label{property:larger-right} For every swap on a path edge that involves a path token, the larger of the two tokens moves to the right.
    \item \label{property:no-re-enter} No token crosses the first path edge, $(v_{k+1},v_{k+2})$, from left to right and later from right to left.  (No token ``re-enters the star''.)  In particular, this implies that no star token crosses the first path edge from left to right. 
    \item \label{property:larger-exit-star} The path tokens that cross the first path edge from left to right do so in order (the first one that crosses is larger than the others, etc.). 
\end{enumerate}
\end{lemma}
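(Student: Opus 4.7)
The overall approach is to prove each property by a local exchange argument: assume the P-optimal sequence $\sigma$ violates the property at some swap, and construct another sequence $\sigma'$ that is either shorter than $\sigma$ (contradicting optimality) or has the same length but strictly fewer swaps on path edges (contradicting P-optimality). The main tool is a ``trajectory uncrossing'' argument: view the swap sequence as a wire diagram, where each vertex is a horizontal wire indexed by time and each swap is a crossing of two wires; then locate two wires that cross at an unnecessary swap and try to eliminate the crossing.

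My plan is to prove property~\ref{property:no-re-enter} first, since it is the most structural, and then derive (or argue analogously) the remaining properties. Suppose a token $t$ crosses $e_1 = (v_{k+1},v_{k+2})$ left-to-right at time $j$ with some partner $u$, and right-to-left at time $j'$ with some partner $u'$. By Claim~\ref{claim:no-duplicate-swap} we have $u \neq u'$. The goal is to modify $\sigma$ so that $t$ never makes the round trip into the path portion: delete both $e_1$-swaps, and reassign the role that $t$ played on the path side between $j$ and $j'$ to $u$ (which in the original sequence immediately moves into the star). Making this formal requires tracing the intermediate swaps: between times $j$ and $j'$, $t$ acts only on path vertices and $u$ only on star vertices, and the single communication channel between the two sides is $e_1$, which $t$ no longer uses in the interval. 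This should give a valid swap sequence with two fewer path-edge swaps, contradicting P-optimality.

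Property~\ref{property:no-star-swap} is then essentially property~\ref{property:no-re-enter} applied to inner path edges. If star tokens $a,b$ swap on a path edge $e=(v_i,v_{i+1})$, then whichever of $a,b$ is pushed further right must later cross $e$ again in the opposite direction to reach its star-leaf home, creating a round trip across $e$ that can be uncrossed by the same technique. (If the path edge is $e_1$ itself, the statement already follows from the ``in particular'' consequence of property~\ref{property:no-re-enter}: a star token cannot cross $e_1$ left-to-right at all.) For property~\ref{property:larger-right}, I would use an inversion-counting argument: if path tokens $p>t$ swap on a path edge with $p$ moving left, then $p$ and $t$ are left in an order opposite to their final order on the path (since $v_p>v_t$). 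Claim~\ref{claim:no-duplicate-swap} forbids swapping this pair again, so restoring their relative order forces extra path-edge swaps involving other tokens, which can be avoided by performing the original swap in the reverse direction. Property~\ref{property:larger-exit-star} follows from a similar inversion argument applied to the order in which path tokens first cross $e_1$ left-to-right: if larger $p$ crosses after smaller $q$, then after both have entered the path, $q$ lies to the right of $p$ but its home lies to the left of $p$'s home, so the pair must eventually be uninverted by additional swaps on path edges inside the path portion.

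The main obstacle will be formalizing the uncrossing argument underlying property~\ref{property:no-re-enter} and its extension to property~\ref{property:no-star-swap}. The delicate point is that while $t$ is on the path between times $j$ and $j'$, it may interact with many other tokens via intermediate path-edge swaps, and naively removing the two $e_1$-crossings can propagate in unintended ways through the wire diagram. The resolution is to observe that the star side and path side communicate only through the single edge $e_1$, so once $t$'s role on the path side is taken over by $u$, subsequent star-side swaps and path-side swaps can be carried out independently up to $e_1$. After this is established cleanly, the inversion-style arguments for properties~\ref{property:larger-right} and~\ref{property:larger-exit-star} should be routine.
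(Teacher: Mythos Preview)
Your overall strategy---local exchange arguments that either shorten $\sigma$ or trade a path-edge swap for a star-edge swap---is exactly what the paper does. But your concrete plan for property~\ref{property:no-re-enter} has a real gap.

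You propose to delete \emph{both} crossings of $t$ over $e_1$ and let $u$ (the partner at the first crossing $\sigma_j$) take over $t$'s role on the path. The role-swap invariant ``all tokens as in $\sigma$ except $t,u$ interchanged'' is indeed maintained from $\sigma_j$ up to $\sigma_{j'}$ (a direct $t\text{--}u$ swap in that interval would violate Claim~\ref{claim:no-duplicate-swap}). The problem is what happens at $\sigma_{j'}$. The partner $u'$ there is \emph{not} $u$ (that would be a duplicate swap), so just before $\sigma_{j'}$ your modified configuration has $u$ at $v_{k+2}$, $u'$ at $v_{k+1}$, and $t$ at whatever third position $u$ occupied in $\sigma$. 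Deleting $\sigma_{j'}$ therefore leaves you off from $\sigma$ by a \emph{3-cycle} on $\{t,u,u'\}$, not a transposition, and you cannot simply resume $\sigma$ from $\sigma_{j'+1}$. Carrying a 3-way relabelling forward is not obviously safe (e.g.\ an original $u\text{--}u'$ swap later is not excluded by Claim~\ref{claim:no-duplicate-swap}), and in any case you would be producing a sequence two swaps shorter than an optimal one. Your claim that ``$u$ acts only on star vertices'' in the interval is also unjustified: nothing stops $u$ from swapping right across $e_1$ immediately after $\sigma_j$.

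The paper's fix is to work with $u'$ rather than $u$. One first argues that $u'$ must sit on a star leaf at time $\sigma_j$ (otherwise $u'$ would lie to the right of $t$ then, and they would have to swap twice). Then, just before $\sigma_j$, insert a swap of $t$ and $u'$ on a \emph{star} edge, role-swap $t\leftrightarrow u'$ through $\sigma_{j'-1}$, and omit $\sigma_{j'}$. This keeps the total length the same while removing exactly one path-edge swap. The same pattern---locate a moment when the two offending tokens are on a star edge, insert a star-edge swap there, role-swap, and delete one path-edge swap---is what the paper uses uniformly for all four properties; your sketches for \ref{property:no-star-swap}, \ref{property:larger-right}, \ref{property:larger-exit-star} gesture at inversion counts but do not yet supply this mechanism. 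In particular, ``perform the original swap in the reverse direction'' for \ref{property:larger-right} is not a well-defined modification; you need to identify the later star-edge moment (both tokens in the star, one at the centre) where the compensating swap can be inserted.
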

\begin{proof}

\paragraph*{(\ref{property:no-star-swap})} Suppose two star tokens $s$ and $t$ swap on a path edge in $\sigma$. We will show how to modify $\sigma$ 
to obtain an optimal swap sequence with fewer swaps on the path edges.  
Sometime after the swap of $s$ and $t$ in $\sigma$, both $s$ and $t$ must enter the star (perhaps multiple times).  
Suppose the last time one of them enters the star, it is $t$ that enters.  Then in this configuration $T$, token $t$ is on the center vertex and $s$ is at a star leaf.  
We will modify $\sigma$ as follows. 
Omit the swap of $s$ and $t$. Then carry out the swap sequence, but with $s$ and $t$ exchanged.  
Continue until the point in $\sigma$ where configuration $T$ was reached. Then swap $s$ and $t$.  Now we are back in configuration $T$.  
The same number of swaps have been performed, but a swap on a path edge has been replaced by a swap on a star edge, so the number 
of swaps on path edges has been reduced.

\paragraph*{(\ref{property:larger-right})} Call a swap ``bad'' if it swaps two tokens by moving a path token to the left and a smaller token to the right.
Suppose $\sigma$ has a bad swap on a path edge.  As above, we will show how to modify $\sigma$ to obtain an optimal swap sequence with fewer swaps on the path edges.  
Suppose that, in $\sigma$,  path token $p$ swaps to the left with a 
smaller token $t$ on a path edge.  In the final configuration, $p$ must be to the right of $t$.  They cannot swap directly by Claim~\ref{claim:no-duplicate-swap}.
So it must happen that $p$ reaches a star leaf, then $t$ reaches a star leaf, then $p$ goes to the center vertex.  Call this intermediate configuration $T'$.
We will modify $\sigma$ as follows.  Omit the swap of $p$ and $t$.  Then carry out the swap sequence, but with $p$ and $t$ exchanged.  
Continue until the point in $\sigma$ where configuration $T'$ was reached.  Then swap $p$ and $t$.  Now we are back in configuration $T'$.  
The same number of swaps have been performed, but a swap on a path edge has been replaced by a swap on a star edge, so the number 
of swaps on path edges has been reduced.

\paragraph*{(\ref{property:no-re-enter})} Suppose that in $\sigma$ there is a swap $\sigma_i$ where a token $t$ crosses the first path edge from left to right, and a later swap $\sigma_j$, $j>i$ where $t$ crosses the first path edge from right to left.  Take the minimum possible $j > i$.
Suppose $\sigma_i$ involves tokens $t$ and $a$ and $\sigma_j$ involves tokens $t$ and $b$.  
We first claim that token $b$ must be on a star leaf when $\sigma_i$ takes place.  Suppose $b$ were on a path edge when $\sigma_i$ takes place. Then $b$ would be to the right of $t$.  Between $\sigma_i$ and $\sigma_j$, token $t$ may move around on path edges, but not the first path edge (by our assumption on $j$).  Token $b$ cannot 
swap twice with $t$ in an optimal swap sequence, so $b$ must still be to the right of $t$ when $\sigma_j$ takes place, a contradiction.  

We will modify $\sigma$ as follows. Just before $\sigma_i$, swap $t$ (on the center vertex) and $b$ (on a star leaf).  Then carry out the swap sequence $\sigma_i, \ldots, \sigma_{j-1}$, but with $t$ and $b$ exchanged.  Then omit swap $\sigma_j$.  This gives the same result.  
The same number of swaps have been performed, but a swap on a path edge has been replaced by a swap on a star edge, so the number 
of swaps on path edges has been reduced.

\paragraph*{(\ref{property:larger-exit-star})}
Suppose a smaller path token $s$ crosses the first path edge from left to right in swap $\sigma_i$ and later on, say in swap $\sigma_j$, $j > i$, a larger path token $t$ crosses the first path edge from left to right. By property (3), token $s$ does not re-cross the first path edge. 
Then, after $t$ crosses, tokens $s$ and $t$ are in the wrong order on the path.  
Since neither token re-enters the star by (3), 
they must swap on the path during some later swap $\sigma_h$, $h>j$.
Now consider where token $t$ is when swap $\sigma_i$ occurs.  If $t$ is on the path, then in order for $t$ to exit the star in swap $\sigma_j$, it must first swap with $s$.  But then the two tokens swap twice, which contradicts Claim~\ref{claim:no-duplicate-swap}.  Therefore, we may assume that $t$ is at a star leaf when $\sigma_i$ occurs.  We will modify $\sigma$ as follows.  Swap $s$ and $t$ just before $\sigma_i$, then carry out the swap sequence with $s$ and $t$ exchanged until swap $\sigma_h$, and omit $\sigma_h$.  
This gives the same result.  
The same number of swaps have been performed, but a swap on a path edge has been replaced by a swap on a star edge, so the number 
of swaps on path edges has been reduced.  
\end{proof}

\begin{theorem}
Algorithm $\cal A$ above finds a swap sequence with a minimum  number of swaps.
\end{theorem}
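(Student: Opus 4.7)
The plan is to proceed by strong induction on the number of unhomed path tokens. The base case is when no path tokens are unhomed: Algorithm $\cal A$ then falls through to Step~2, whose optimality is Lemma~\ref{lemma:stars}. For the inductive step, let $p=p_{\max}$ be the largest unhomed path token in the current configuration and let $v_p$ be its home. Because every token with index greater than $p$ is already home, $v_p$ is the rightmost wrongly-occupied vertex. I would show that some P-optimal swap sequence $\sigma^{*}$ can be taken to begin with exactly the swaps performed by one iteration of Step~1 for $p$, after which $p$ sits at $v_p$ and is never touched again; the remainder of $\sigma^{*}$ is then a swap sequence on a strictly smaller broom, and induction applies.

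The structural input to this argument is that in any P-optimal sequence $\sigma^{*}$, the trajectory of $p$ has the form predicted by the algorithm: if $p$ starts on the path it marches monotonically rightward to $v_p$; if $p$ starts on a star leaf it first moves to the center via a sequence of star-edge swaps and then marches rightward to $v_p$; and in every case $p$ never moves again after reaching $v_p$. Monotone rightward motion on path edges is forced by property~(\ref{property:larger-right}) of Lemma~\ref{lemma:minimal-properties}, since $p$ is the largest unhomed path token and is therefore always the dominant endpoint of any path-edge swap it participates in; no-re-entry into the star is property~(\ref{property:no-re-enter}); and $p$ is pinned at $v_p$ afterwards because either direction of motion contradicts property~(\ref{property:larger-right}) (the neighbour $v_{p+1}$ already holds the larger homed token $p+1$, and at $v_{p-1}$ any smaller token would force $p$ leftward). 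A preliminary observation, which I would also justify using property~(\ref{property:larger-right}), is that no token with index greater than $p$ is ever displaced in a P-optimal sequence, so one may work throughout with $v_{p+1},\ldots,v_n$ frozen.

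The core of the proof is then an exchange argument to migrate all swaps implementing $p$'s trajectory to the very front of $\sigma^{*}$ without increasing its length. Swaps that do not touch this trajectory commute with the trajectory's swaps, so the rearrangement is benign; the truly delicate case is the star-leaf start, where the algorithm uses $m$ star swaps to resolve a centered star chain $t_1,\ldots,t_m,p$ when one exists (the last of these moving $p$ to the center) and a single star swap otherwise. By properties~(\ref{property:no-re-enter}) and~(\ref{property:larger-exit-star}) there is a unique, first swap in $\sigma^{*}$ in which $p$ crosses the first path edge from left to right, so the prefix of $\sigma^{*}$ up to that swap restricted to star edges solves a well-defined star sub-problem: move $p$ to the center, possibly homing some star tokens en route. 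I would argue that the cost of this prefix is at least $m$ (resp.\ $1$), matching the algorithm's count.

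The main obstacle I anticipate is precisely this last count: showing that no star-edge prefix of $\sigma^{*}$ can dislodge $p$ from its leaf and escort it to the center in fewer swaps than the algorithm uses. I expect to handle it by an inversion-style accounting on the star sub-problem, using Lemma~\ref{lemma:stars} applied to the induced permutation that the prefix is required to realize: any chain shorter than $m$ that terminates at $p$ must bypass some $t_i$, and by Claim~\ref{claim:no-duplicate-swap} that $t_i$ will need a compensating swap that is charged later in $\sigma^{*}$. Once this step is in hand, the rearranged $\sigma^{*}$ matches the algorithm on its initial segment, pins $p$ at $v_p$, and its tail is a swap sequence on the broom obtained by deleting $v_p,v_{p+1},\ldots,v_n$; since Algorithm $\cal A$ on the original input from this point behaves exactly as on the reduced broom, the inductive hypothesis finishes the proof.
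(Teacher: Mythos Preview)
Your overall plan---induction on the number of unhomed path tokens, using Lemma~\ref{lemma:minimal-properties} to pin down the trajectory of $p_{\max}$ in a P-optimal sequence, then an exchange argument to bring the first iteration's swaps to the front---is exactly the paper's approach, and your treatment of the path-start case and of the freezing of already-homed larger path tokens is correct.

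The gap is in the star-leaf case, precisely where you flag it. Your claim that ``the cost of this prefix is at least $m$'' is false as stated: a P-optimal sequence may bring $p$ to the center in a \emph{single} star swap (exchange $p$ with the center token $t_1$) and only later pay for the star tokens $t_1,\ldots,t_m$, which after that swap form a locked cycle among star leaves. So the star-edge prefix can have length $1$, not $m$, with the compensating cost incurred after $p$ has already left the star. Your fallback---``that $t_i$ will need a compensating swap that is charged later''---is the right instinct, but Claim~\ref{claim:no-duplicate-swap} and a generic appeal to Lemma~\ref{lemma:stars} do not by themselves deliver the required inequality, nor do they let you \emph{rearrange} $\sigma^*$ so that its initial segment literally coincides with the algorithm's $m$ chain swaps, which is what the induction needs.

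The paper closes this gap with two ingredients you are missing. First, it imposes a further tie-break beyond P-optimality: among P-optimal sequences, pick one minimizing the number of star-edge swaps performed before $p$ first reaches the center. Under this refinement it proves a dichotomy: the chosen sequence either swaps $p$ and $t_1$ directly, or performs the full centered star chain---any intermediate chain can be shortened by deferring one non-homing swap until after $p$ leaves, contradicting the tie-break. Second, it rules out the single-swap branch whenever a centered chain exists by a direct distance count: the swap of $p$ and $t_1$ creates a length-$m$ cycle among star leaves whose resolution needs at least $m+1$ swaps (the first and last swap touching the cycle each move only one cycle token), so that route costs $m+2$ versus $m$ for the chain and is therefore not optimal. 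Both steps are needed: without the dichotomy you cannot reduce to just these two alternatives, and without the count you cannot eliminate the single-swap alternative.
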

\begin{proof}
By induction on the number of tokens in the path, with the base case when the path has no edges and the broom is just a star.  
For the general induction step it suffices to show that the swaps performed in the first phase of the algorithm when $p_{\max} = n$ are part of an optimal swap sequence. 
We consider three cases depending on the initial position of token $n$.

\paragraph*{Case 1.} Token $n$ is initially on a path vertex, not the center vertex.
Let $\sigma$ be a P-optimal swap sequence.
By Lemma~\ref{lemma:minimal-properties}, Property~(\ref{property:larger-right}) token $n$ only moves to the right in $\sigma$.  
We claim that we can modify $\sigma$ to do the swaps involving $n$ first.  
Let $\sigma_i$ be the swap in $\sigma$ that homes token $n$, and let $T$ be the resulting configuration of tokens.  Let $k$ be the distance from token $n$'s initial position to its home vertex $v_n$.  Then there are $k$ swaps in $\sigma_1, \ldots, \sigma_i$ that involve token $n$.
Let $\sigma'$ be the subsequence of $\sigma_1, \ldots, \sigma_i$ consisting of the $i-k$ swaps that do not involve token $n$.
Construct a swap sequence, $\tau$, that first homes token $n$ in $k$ swaps, then performs the swaps $\sigma'$, and then proceeds with swaps $\sigma_{i+1}, \ldots, \sigma_m$.  Homing token $n$ leaves the remaining tokens in the same relative order along the path and does not alter the tokens in the star.  Thus, the swaps $\sigma'$ can be performed next, and the result is the same configuration $T$ as achieved by $\sigma$.  This implies that $\tau$ is an optimal swap sequence that matches our algorithm in the first phase. 

\paragraph*{Case 2.} Token $n$ is initially on the center vertex.
Let $\sigma$ be a P-optimal swap sequence.  Sequence $\sigma$ must contain a swap, say $\sigma_j$ that swaps
token $n$ along the first path edge $e = (v_{k+1},v_{k+2})$.  
By Lemma~\ref{lemma:minimal-properties}, Property~(\ref{property:larger-exit-star}) this is the first swap that occurs along edge $e$.
If there are no swaps before $\sigma_j$ on star edges, then we can just apply the argument from Case~1---home token $n$ first, and then proceed with the swaps that do not involve token $n$.

Thus it suffices to show that there is a P-optimal swap sequence that does not do any swaps on star edges before swapping token $n$ along the first path edge.
We will adjust $\sigma$ to make this true. 
Separate the subsequence $\sigma_1, \ldots, \sigma_{j-1}$ into two subsequences: $\sigma'$, the swaps that operate on path edges, and $\sigma''$, the swaps that operate on star edges.  
Observe that we can rearrange $\sigma_1, \ldots, \sigma_{j-1}$ to perform $\sigma'$ and then $\sigma''$, since there is no interaction between the star and the path in this time interval. Observe that $\sigma'$ does not move token $n$, and that 
$\sigma''$ returns token $n$ to the center vertex. 
We will do one more modification.  After $\sigma'$, perform the swap $\sigma_j$, which swaps token $n$ with token $a$, say.  Then perform the sequence $\sigma''$ of swaps on star edges but with $a$ in place of $n$.  Clearly this gives the same result, with the same number of swaps, and the same number of swaps on path edges.
Thus there is an optimal swap sequence that matches our algorithm in the first phase.

\paragraph*{Case 3.} Token $n$ is initially on a star leaf.  Let the token on the center vertex be $t$. 
Consider a P-optimal swap sequence $\sigma$ with the further property that it does a minimum number of swaps on star edges before the first swap, $\sigma_h$, that moves token $n$ to the center vertex.
We will show that $\sigma$ performs the same swaps as our algorithm does up to the point where the algorithm moves token $n$ to the center vertex.

If the situation in Step 1(b) applies, we will say there is a \emph{complete homing cycle} from $t$ to $n$.  Using this terminology, our algorithm does a complete homing cycle if it exists, and otherwise swaps $n$ and $t$.  We must prove that $\sigma$ does the same. 
Our proof will have two steps:

(1) $\sigma$ either swaps $n$ and $t$ or does a complete homing cycle.

(2) If there exists a complete homing cycle then $\sigma$ does it.

To prove (1), suppose that $\sigma$ does not swap $n$ and $t$ and does not do a complete homing cycle. 
By Lemma~\ref{lemma:minimal-properties}, Property~(\ref{property:larger-exit-star}) applied to $\sigma$, the first swap on the first path edge involves token $n$.  Thus, the tokens on the star and the tokens on the path do not interact until after swap $\sigma_h$. 
Consider the subsequence $\sigma'$ of $\sigma_1, \ldots, \sigma_h$ that consists of swaps on star edges. They achieve a certain permutation $\cal P$ of the tokens on the star.  By hypothesis, $\sigma'$ is a minimum length swap sequence effecting the permutation $\cal P$.  
This permutation can be written as a disjoint union of cycles, and---because the token $n$ moves to the star center, which contains token $t$---one of those cycles, say $C$ has the form $(t_0 t_1 t_2 \cdots t_\ell)$ where $t_0=t$ and $t_\ell =n$.  
By the known results for token swapping on stars (Lemma~\ref{lemma:stars}), 
$\sigma'$ then effects the cycle $C$ by swapping $t_0$ and $t_1$, then $t_1$ and $t_2$, and so on, until finally $t_{\ell-1}$ is swapped with $t_\ell$.  
Since this places $t_{\ell} = n$ on the center vertex, it must complete $\sigma'$.  Also, since $\sigma$ does not swap $n$ and $t$, we have $\ell \ge 2$.

By assumption, $\sigma'$ is not a complete homing cycle.  Therefore, 
there must be some $i =0, \ldots \ell-1$ such that the swap of $t_i$ and $t_{i+1}$ does not move $t_i$ to its home.
We claim that there is a P-optimal sequence with fewer swaps on star edges before $n$ reaches the center vertex, which will be a contradiction. 
We will modify $\sigma$ as follows.  Modify the cycle $C$ by omitting $t_{i}$ (or omit $t_1$ in case $i=0$).  In the resulting token placement, $t_{i-1}$ and $t_i$ (or $t_0$ and $t_1$) are switched compared to the configuration in $\sigma$.  Both tokens are on star leaves.  Continue the swap sequence $\sigma$ beyond $\sigma_h$ until the first time when one of these two tokens is swapped (necessarily to the center vertex).  Observe that such a time must exist since token $t_i$ is not home.  At this point, one of the two tokens is at the center vertex and the other is at a star leaf.  Swap the two tokens at this point.  The result is the same, the total number of swaps is the same, and the number of swaps on path edges is the same.  Thus we have a P-optimal swap sequence with fewer swaps on star edges before $n$ reaches the center vertex. This completes the proof of (1).

To prove (2), suppose that $\sigma$ does not do a complete homing cycle.  Then, by (1), $\sigma$ swaps $n$ and $t$ and this is not a complete homing cycle.  
Suppose, for a contradiction, that 
there is a complete homing cycle of length $\ell \ge 3$. Then the swap on $n$ and $t$ creates---in the  
permutation we need to effect---a cycle $C$
of length $\ell -1$ among the star tokens that lie at star leaves.
We now prove that it takes at least $\ell$ swaps to solve this cycle.
 (This is the same argument as used for stars.)
Each token in $C$ is distance $2$ from its home, so the sum of the distances is $2(\ell -1)$ and 
we need at least $\ell -1$ swaps, since one swap can move two of the tokens closer to their homes.  Furthermore, the first swap and the last swap that operate on the tokens in the cycle move only one of the cycle tokens.  This gives a total of $\ell$ swaps. 
Together with the swap on $n$ and $t$ this is $\ell + 1$ swaps.
However, doing the complete homing cycle first would take $\ell -1$ swaps, which is better, a contradiction to $\sigma$ being optimal.  This completes the proof of (2).
\end{proof}

\section{Conclusions and Open Questions}
\label{sec:conclusions}

We have identified a previously unexplored difficulty of token swapping on trees---namely that we must decide how and when to move tokens that are at happy leaves. This difficulty does not arise for the cases where polynomial-time algorithms are known, specifically, paths, stars and brooms.

We showed that any algorithm
for token swapping on a tree that fixes tokens at happy leaves cannot achieve better than a $\frac{4}{3}$ approximation factor, and that this lower bound rises to 2 for two of the three known approximation algorithms, thus providing tight approximation factors for them.  

Furthermore, we proved that weighted coloured token swapping is NP-complete for spiders, but polynomial-time for paths and stars. 

We conclude with some open questions. 

\begin{enumerate} 

\item Find other classes of trees for which token swapping can be solved in polynomial time.

\item Find other classes of trees for which coloured weighted token swapping can be solved in polynomial time.  Currently, there is no known class of trees where coloured weighted token swapping is NP-complete but token swapping is in P. Two candidate classes are spiders, where coloured weighted token swapping is NP-complete but token swapping is of unknown complexity, and brooms, where token swapping is in P but the coloured weighted version is of unknown complexity. 
    
    \item Characterize the class of trees for which the happy leaf  property  holds for every token assignment. Certainly the tree should not have the 10-vertex tree of Figure~\ref{fig:happy-swap-counter-ex} as a subtree.  
    
    \item Is there a polynomial time algorithm for token swapping on any tree for which the happy leaf property     holds?  This may not be easy, given the difficulty of correctly solving token swapping on a broom.  
    
    \item Is there an approximation algorithm for token swapping on a tree with approximation factor better than 2?    
  
    \item The example in Figure~\ref{fig:4-3-approx-example}, which defeats all algorithms that fix happy leaves, consists of a star joined to two paths. Such a \emph{two-tailed star} is like a broom with an extra handle.  We conjecture that there is a polynomial time algorithm for token-swapping on two-tailed stars. This would be a starting point towards solving token swapping when happy leaves must be swapped. 

    \item
    For general graphs there is a 4-approximation algorithm~\cite{miltzow2016approximation} for token swapping.  Is the approximation factor 4 tight?
    
\end{enumerate}
  

\bibliographystyle{plain}
\bibliography{refs}

\end{document}